\newcommand{\Tr}{\mathrm{ Tr }}
\newcommand{\tn}[1]{\textnormal{#1}}
\newcommand{\be}{\begin{equation}}
\newcommand{\ee}{\end{equation}}
\newcommand{\p}{\prime}
\newcommand{\leak}{\tn{leak}_{\tn{EC}}}
\newcommand{\epscorr}{\eps_{\tn{cor}}}
\newcommand{\Stol}{S_{\tn{tol}}}
\newcommand{\Qtol}{Q_{\tn{tol}}}
\newcommand{\Stes}{S_{\tn{test}}}
\newcommand{\Qtes}{Q_{\tn{test}}}
\newcommand{\esec}{\eps_{\tn{sec}}}
\newcommand{\devicesource}{\mathbb{S}}
\newcommand{\devicekey}{\mathbb{M}_{\tn{key}}}
\newcommand{\devicetest}{\mathbb{M}_{\tn{test}}}
\newcommand{\sket}[1]{{\ensuremath{\lvert#1\rangle}}}
\newcommand{\lket}[1]{{\ensuremath{\left\lvert#1\right\rangle}}}
\newcommand{\ket}[1]{\if@display\lket{#1}\else\sket{#1}\fi}
\newcommand{\sbra}[1]{{\ensuremath{\langle#1\rvert}}}
\newcommand{\lbra}[1]{{\ensuremath{\left\langle#1\right\rvert}}}
\newcommand{\bra}[1]{\if@display\lbra{#1}\else\sbra{#1}\fi}
\newcommand{\sbraket}[2]{{\ensuremath{\langle#1\rvert#2\rangle}}}
\newcommand{\lbraket}[2]{{\ensuremath{\left\langle#1\!\left\rvert\vphantom{#1}#2\right.\!\right\rangle}}}
\newcommand{\braket}[2]{\if@display\lbraket{#1}{#2}\else\sbraket{#1}{#2}\fi}
\newcommand{\sketbra}[2]{{\ensuremath{\lvert #1\rangle\!\langle #2\rvert}}}
\newcommand{\lketbra}[2]{{\ensuremath{\left\lvert #1\right\rangle\!\!\left\langle #2\right\rvert}}}
\newcommand{\ketbra}[2]{\if@display\lketbra{#1}{#2}\else\sketbra{#1}{#2}\fi}
\newcommand{\proj}[1]{\ketbra{#1}{#1}}
\newcommand{\hilbert}{\ensuremath{\mathcal{H}}}
\DeclareMathOperator{\tr}{tr}
\newcommand{\strace}[2][@]{\ensuremath{\tr\ifthenelse{\equal{#1}{@}}{}{_{#1}}(#2)}}
\newcommand{\ltrace}[2][@]{\ensuremath{\tr\ifthenelse{\equal{#1}{@}}{}{_{#1}}\left(#2\right)}}
\newcommand{\ktrace}[2][@]{\ensuremath{\tr\ifthenelse{\equal{#1}{@}}{}{_{#1}}\left[#2\right]}}
\newcommand{\trace}[2][@]{\if@display\ltrace[#1]{#2}\else\strace[#1]{#2}\fi}
\newcommand{\sspan}[1]{{\ensuremath{\operatorname{span}(#1)}}}
\newcommand{\lspan}[1]{{\ensuremath{\operatorname{span}\left(#1\right)\!}}}
\newcommand{\vspan}[1]{\if@display\lspan{#1}\else\sspan{#1}\fi}
\newcommand{\HmaxOp}{H_{\max}}
\newcommand{\HminOp}{H_{\min}}
\newcommand{\ImaxOp}{I_{\max}}
\newcommand{\DmaxOp}{D_{\max}} 
\newcommand{\pguessOp}{p_{\operatorname{guess}}}
\newcommand{\sHmin}[2][@]{\ensuremath{\HminOp(#2)\ifthenelse{\equal{#1}{@}}{}{_{#1}}}}
\newcommand{\lHmin}[2][@]{\ensuremath{\HminOp\left(#2\right)\ifthenelse{\equal{#1}{@}}{}{_{#1}}}}
\newcommand{\Hmin}[2][@]{\if@display\lHmin[#1]{#2}\else\sHmin[#1]{#2}\fi}
\newcommand{\sHminSmooth}[3][@]{\ensuremath{\HminOp^{#2}(#3)\ifthenelse{\equal{#1}{@}}{}{_{#1}}}}
\newcommand{\lHminSmooth}[3][@]{\ensuremath{\HminOp^{#2}\left(#3\right)\ifthenelse{\equal{#1}{@}}{}{_{#1}}}}
\newcommand{\HminSmooth}[3][@]{\if@display\lHminSmooth[#1]{#2}{#3}\else\sHminSmooth[#1]{#2}{#3}\fi}
\newcommand{\sHmax}[2][@]{\ensuremath{\HmaxOp(#2)\ifthenelse{\equal{#1}{@}}{}{_{#1}}}}
\newcommand{\lHmax}[2][@]{\ensuremath{\HmaxOp\left(#2\right)\ifthenelse{\equal{#1}{@}}{}{_{#1}}}}
\newcommand{\Hmax}[2][@]{\if@display\lHmax[#1]{#2}\else\sHmax[#1]{#2}\fi}
\newcommand{\sHmaxSmooth}[3][@]{\ensuremath{\HmaxOp^{#2}(#3)\ifthenelse{\equal{#1}{@}}{}{_{#1}}}}
\newcommand{\lHmaxSmooth}[3][@]{\ensuremath{\HmaxOp^{#2}\left(#3\right)\ifthenelse{\equal{#1}{@}}{}{_{#1}}}}
\newcommand{\HmaxSmooth}[3][@]{\if@display\lHmaxSmooth[#1]{#2}{#3}\else\sHmaxSmooth[#1]{#2}{#3}\fi}
\newcommand{\sImax}[2][@]{\ensuremath{\ImaxOp(#2)\ifthenelse{\equal{#1}{@}}{}{_{#1}}}}
\newcommand{\lImax}[2][@]{\ensuremath{\ImaxOp\left(#2\right)\ifthenelse{\equal{#1}{@}}{}{_{#1}}}}
\newcommand{\Imax}[2][@]{\if@display\lImax[#1]{#2}\else\sImax[#1]{#2}\fi}
\newcommand{\sImaxSmooth}[3][@]{\ensuremath{\ImaxOp^{#2}(#3)\ifthenelse{\equal{#1}{@}}{}{_{#1}}}}
\newcommand{\lImaxSmooth}[3][@]{\ensuremath{\ImaxOp^{#2}\left(#3\right)\ifthenelse{\equal{#1}{@}}{}{_{#1}}}}
\newcommand{\ImaxSmooth}[2][@]{\if@display\lImaxSmooth[#1]{#2}\else\sImaxSmooth[#1]{#2}\fi}
\newcommand{\sDmax}[2]{\ensuremath{\DmaxOp(#1\|#2)}}
\newcommand{\lDmax}[2]{\ensuremath{\DmaxOp\left(#1\middle\|#2\right)}}
\newcommand{\Dmax}[2]{\if@display\lDmax{#1}{#2}\else\sDmax{#1}{#2}\fi}
\newcommand{\spguess}[2][@]{\ensuremath{\pguessOp(#2)\ifthenelse{\equal{#1}{@}}{}{_{#1}}}}
\newcommand{\lpguess}[2][@]{\ensuremath{\pguessOp\left(#2\right)\ifthenelse{\equal{#1}{@}}{}{_{#1}}}}
\newcommand{\pguess}[2][@]{\if@display\lpguess[#1]{#2}\else\spguess[#1]{#2}\fi}
\newcommand{\no}[1]{\ensuremath{\mathcal{S}(\hilbert_{#1})}} % normalized operators. alternative used in some papers: S_=
\newcommand{\sno}[1]{\ensuremath{\mathcal{S}_\leq(\hilbert_{#1})}} % sub-normalized operators
\newcommand{\strnorm}[1]{\ensuremath{\|#1\|_{\tr}}}
\newcommand{\ltrnorm}[1]{\ensuremath{\left\|#1\right\|_{\tr}}}
\newcommand{\trnorm}[1]{\if@display\ltrnorm{#1}\else\strnorm{#1}\fi}
\DeclareMathOperator{\Fro}{Fro}
\newcommand{\sfronorm}[1]{\ensuremath{\|#1\|_{\Fro}}}
\newcommand{\lfronorm}[1]{\ensuremath{\left\|#1\right\|_{\Fro}}}
\newcommand{\fronorm}[1]{\if@display\lfronorm{#1}\else\sfronorm{#1}\fi}
\newcommand{\sinftynorm}[1]{\ensuremath{\|#1\|_{\infty}}}
\newcommand{\linftynorm}[1]{\ensuremath{\left\|#1\right\|_{\infty}}}
\newcommand{\inftynorm}[1]{\if@display\linftynorm{#1}\else\sinftynorm{#1}\fi}
\DeclareMathOperator{\polyOp}{poly} % for "polynomial"
\newcommand{\poly}[1]{\if@display\lpoly{#1}\else\spoly{#1}\fi}
\newcommand{\spoly}[1]{\ensuremath{\polyOp (#1)}}
\newcommand{\lpoly}[1]{\ensuremath{\polyOp \left(#1\right)}}
\DeclareMathOperator{\suppOp}{supp} % for "support"
\newcommand{\supp}[1]{\if@display\lsupp{#1}\else\ssupp{#1}\fi}
\newcommand{\ssupp}[1]{\ensuremath{\suppOp (#1)}}
\newcommand{\lsupp}[1]{\ensuremath{\suppOp \left(#1\right)}}
\DeclareMathOperator{\GFOp}{GF} % for Gallois field
\newcommand{\GF}[1]{\if@display\lGF{#1}\else\sGF{#1}\fi}
\newcommand{\sGF}[1]{\ensuremath{\GFOp (#1)}}
\newcommand{\lGF}[1]{\ensuremath{\GFOp \left(#1\right)}}
\DeclareMathOperator*{\E}{\mathds{E}}  % expectation
\newcommand{\coloneqq}{:=}
\newcommand{\eps}{\varepsilon}
\newcommand{\cJ}{\mathcal{J}}
\newcommand{\cK}{\mathcal{K}}
\newcommand{\cP}{\mathcal{P}}
\newcommand{\cS}{\mathcal{S}}
\newcommand{\cT}{\mathcal{T}}
\newcommand{\cX}{\mathcal{X}}
\newcommand{\cZ}{\mathcal{Z}}
\theoremstyle{plain}
\newtheorem{thm}{Theorem}
\newtheorem{theorem}{Theorem}
\newtheorem{lem}[theorem]{Lemma}
\newtheorem{cor}[theorem]{Corollary}
\theoremstyle{definition}
\begin{document}
\title{Device-Independent Quantum Key Distribution with Local Bell Test}
\author{Charles Ci Wen \surname{Lim}}
\affiliation{Group of Applied Physics, University of Geneva, Switzerland.}
\author{Christopher \surname{Portmann}}
\affiliation{Group of Applied Physics, University of Geneva, Switzerland.}
\affiliation{Institute for Theoretical Physics, ETH Zurich, Switzerland.}
\author{Marco \surname{Tomamichel}}
\affiliation{Institute for Theoretical Physics, ETH Zurich, Switzerland.}
\affiliation{Centre for Quantum Technologies, National University of Singapore, Singapore.}
\author{Renato \surname{Renner}}
\affiliation{Institute for Theoretical Physics, ETH Zurich, Switzerland.}
\author{Nicolas \surname{Gisin}}
\affiliation{Group of Applied Physics, University of Geneva, Switzerland.}

\begin{abstract}
  Device-independent quantum key distribution (DIQKD) in its current
  design requires a violation of a Bell's inequality between two
  parties, Alice and Bob, who are connected by a quantum
  channel. However, in reality, quantum channels are lossy and current
  DIQKD protocols are thus vulnerable to attacks exploiting the
  detection-loophole of the Bell test. Here, we propose a novel
  approach to DIQKD that overcomes this limitation. In particular, we
  propose a protocol where the
  Bell test is performed entirely on two casually independent devices situated in Alice's laboratory. As a result,
  the detection-loophole caused by the losses in the channel is avoided.
\end{abstract}
\maketitle

\section{Introduction} 

The security of quantum key distribution (QKD)~\cite{bb84,ekert91}
relies on the fact that two honest parties, Alice and Bob, can devise
tests---utilizing laws of quantum physics---to detect any attack by an
eavesdropper, Eve, that would compromise the secrecy of the key strings they
generate~\cite{Gisin2002}. While the theoretical principles allowing
this are nowadays well understood, it turns out that realizing QKD
with practical devices is rather challenging. That is, the devices
must conform to very specific models, otherwise the implementation may
contain loopholes allowing side-channel
attacks~\cite{sidechannelrefs}.

In general, there are two broad approaches towards overcoming such
implementation flaws. The first is to include all possible
imperfections into the model used in the security analysis. This
approach, however, is quite cumbersome and it is unclear whether any
specific model includes all practically relevant imperfections.  In
the second approach, which is known as \emph{device-independent} QKD
(DIQKD)~\cite{pironio09,McKague2009,haenggi10,Haenggi10a, masanes10},
the security is based solely on the observation of non-local
statistical correlations, thus it is no longer necessary to provide
any model for the devices (though a few assumptions are still
required). In this respect, DIQKD appears to be the ultimate solution
to guarantee security against inadvertently flawed devices and
side-channel attacks.

DIQKD in its current design requires the two distant parties Alice and
Bob to perform a Bell test~\cite{Bell1964} (typically the
Clauser-Horne-Shimony-Holt (CHSH) test~\cite{Clauser1969}), which is
applied to pairs of entangled quantum systems shared between them.  In
practice, these quantum systems are typically realized by photons,
which are distributed via an optical fiber. Hence, due to losses
during the transmission, the individual measurements carried out on Alice's
and Bob's sites only succeed with bounded (and often small)
probability. In standard Bell experiments, one normally accounts for
these losses by introducing the \emph{fair-sampling} assumption, which
asserts that the set of runs of the experiment---in which both Alice's
and Bob's measurements succeeded---is representative for the set of all
runs.

In the context of DIQKD, however, the fair-sampling assumption is not
justified since Eve may have control over the set of detected
events. More concretely, she may use her control to emulate quantum
correlations based on a local deterministic model, i.e., she instructs
the detector to click only if the measurement setting (chosen by the party, e.g., Alice) is compatible
with the prearranged values. This problem is commonly known as the
detection-loophole~\cite{Pearle1970}. In fact, for state-of-the-art
DIQKD protocols, it has been shown in Ref.~\cite{Gerhardt2011} that the detection-loophole is
already unavoidable when using optical fibers of about 5km length.

One possible solution to this problem are heralded qubit
amplifiers~\cite{HQA}, which have been proposed recently. The basic
idea is to herald the arrival of an incoming quantum system without
destroying the quantum information it carries. This allows Alice and
Bob to choose their measurement settings only after receiving the
confirmation, which is crucial for guaranteeing security.
Unfortunately, realizing an efficient heralded qubit amplifier that is applicable for long distance DIQKD is extremely challenging; although there has been progress along this direction~\cite{Kocsis2012}.

In this work, we take a different approach to circumvent the detection-loophole. We propose a protocol that combines a self-testing scheme
for the Bennett and Brassard (BB84) states~\cite{bb84} with a protocol
topology inspired by the ``time-reversed'' BB84
protocol~\cite{Biham96, Inamori2002,Braunstein2012,Lo12}. Crucially,
the protocol only requires Bell tests carried out locally in Alice's
laboratory, so that the detection probabilities are not affected by the losses
in the channel connecting Alice and Bob.  We show that the protocol
provides device-independent security under the assumption that certain
devices are causally independent (see below for a more precise
specification of the assumptions).

In contrast to existing protocols for DIQKD, whose security is
inferred from the monogamy of non-local correlations, the security of
our protocol is proved using a recent generalization of the entropic
uncertainty relation that accounts for quantum side
information~\cite{berta10}.  This is the key ingredient that allows us
to circumvent the need to bound the non-locality between particle
pairs shared by Alice and Bob (non-locality over larger distances is hard to achieve, as explained
above). Instead, the uncertainty relation solely depends on the local
properties of the states sent by Alice, which in turn, can be
inferred from the local Bell test. 

Technically, our security proof uses a relation between the local CHSH
test and a variant of the entropic uncertainty relation for smooth
entropies~\cite{Tomamichel2010}. The analysis applies to the
(practically relevant) finite-size regime, where the secret key is
generated after a finite number of channel uses. The resulting bounds
on the achievable key size are comparable to the almost tight
finite-size result~\cite{Tomamichel2012a} for the BB84
protocol. Furthermore, in the (commonly studied) asymptotic limit
where the number of channel uses tends to infinity, and in the
limiting case where the CHSH inequality is maximally violated, the
performance of our protocol reaches the one of the BB84 protocol.

\section{Required Assumptions}

As mentioned above, our goal is to impose only limited and realistic
assumptions on the devices used by Alice and Bob. These are as
follows:

First, it is assumed that Alice and Bob's laboratories are perfectly
isolated, i.e., no information leaves a party's laboratory unless this
is foreseen by the protocol. Second, we assume that Alice and Bob can
locally carry out classical computations using trusted devices and
that they have trusted sources of randomness. Third, we assume that
Alice and Bob share an authenticated classical channel. Finally, we
require that the devices of Alice and Bob are causally independent,
that is, there is no correlation in their behavior between different
uses. This, for instance, is guaranteed if the devices have no
internal memory or if their memory can be reliably erased after each
use.

We remark that, in very recent work~\cite{RUV12,VV12,BCK12}, it has
been shown that this last assumption can be weakened further for
standard DIQKD protocols. More precisely, it is shown that the
assumption of causal independence can be dropped for the repeated uses
of a device within one execution of the protocol. However, the
assumption of causal independence still needs to be made when the same
devices are reused in a subsequent execution of the protocol, as
information about previously generated keys may leak
otherwise~\cite{BCK13}. 

\section{Protocol Topology}

In this section we describe the basic idea and the main structure of
the QKD scheme we propose. The actual protocol will then be detailed
in the next section.

Our proposal is motivated by the time-reversed BB84
protocol~\cite{Biham96, Inamori2002,Braunstein2012,Lo12}. This
protocol involves a third party, Charlie, whose task is to help Alice
and Bob distribute their key strings. Importantly, however, no trust in this
third party is required. While a deviation of Charlie from the
protocol may cause abortion of the key distribution protocol, it will
not compromise the secrecy of a successfully distributed key string. 
The time-reversed BB84 protocol consists of the followings steps:
first, Alice and Bob each generate a pair of qubits in the maximally
entangled state $\ket{\Phi^+}=\left(\ket{00}+\ket{11}\right)\sqrt{2}$
and send one of their qubits to Charlie. Subsequently, Charlie
performs a Bell-state-measurement (BSM) on the two received qubits and
broadcasts the outcome to Alice and Bob~\cite{Bennett1993}. The two
remaining qubits held by Alice and Bob are now in a Bell state. Alice
then applies appropriate bit and phase flips on her qubit to convert
this joint state to $\ket{\Phi^+}$. Finally, Alice and Bob measure
their qubits at random in one of the two BB84 bases. Note that Alice and Bob can alternatively measure the qubits they kept before Charlie performs the BSM, and Alice flips the outcome of her measurement if necessary once she has received the correction (i.e., the outcome of the BSM) from Charlie. 

The security of the time-reversed BB84 protocol, as described above,
depends on the correct preparation and measurement of the states by
Alice and Bob. In order to turn this protocol into a device
independent one, we add a CHSH test on Alice's site. Security is then
established by virtue of a relation between the violation of this CHSH
test and the incompatibility of the two possible measurements carried
out by Alice's device (which are supposed to be in the two BB84
bases). More precisely, we bound the overlap between the basis vectors
of the two measurements that Alice may choose. This is all that is
needed to apply the entropic uncertainty relation~\cite{Tomamichel2010} mentioned in the introduction, which
allows us to infer security without any further assumptions on Alice's
and Bob's devices. We note that our modification of the time-reversed
BB84 protocol is reminiscent of the idea of self-testing of devices
introduced by Mayers and Yao~\cite{MY04} (see Ref.~\cite{MYS12} for
the CHSH version). Our test has however a different purpose: its goal
is to certify the incompatibility of Alice's local measurements, while the test of Mayers and Yao certifies that Alice and Bob share a maximally entangled state. 

\begin{figure}[t]
\includegraphics[width=86mm]{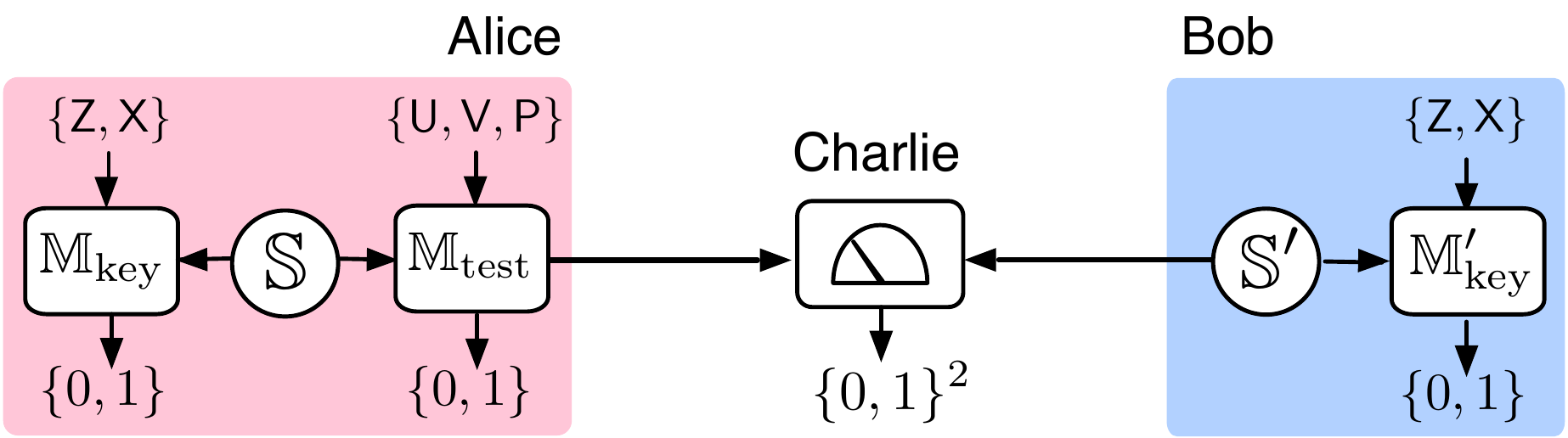}
  \caption{\textbf{Topology}. The protocol is inspired by the idea of the time-reversed BB84 protocol and involves an additional (untrusted) party, Charlie. Charlie is supposed to make an entangling measurement (ideally, a Bell-state-measurement) on quantum states sent by Alice and Bob. He outputs either a pass or fail to indicate whether the measurement was successful. If successful, he additionally outputs two bits to be used by Alice to make correcting bit-flip operations.}  
  \label{fig1}
\end{figure}

In order to realize the CHSH test, we use a setup with three different
devices on Alice's site: two measurement devices $\devicekey$,
$\devicetest$ and a source device $\devicesource$ (see
Fig.~\ref{fig1}).  The source device generates a pair of entangled
qubits and sends them to $\devicekey$ and $\devicetest$. The device
$\devicekey$ has two settings
$\{\mathsf{X},\mathsf{Z}\}$~\cite{footnote1} and produces a binary
output after one of the settings is chosen by Alice. The device
$\devicetest$ has three settings $\{\mathsf{U},
\mathsf{V},\mathsf{P}\}$. The first two produce a binary output (a measurement outcome), and the last one sends the qubit (from the device $\devicesource$) to the quantum channel which connects to Charlie. Alice has
therefore two modes of operation, of which one (corresponding to the
settings $\mathsf{U},\mathsf{V}$) is used to carry out the CHSH test
and one (corresponding to the setting $\mathsf{P}$) is chosen to communicate to Charlie. We refer to these operation modes as
$\Gamma_{\tn{CHSH}}$ and $\Gamma_{\tn{QKD}}$, respectively.

Bob has two devices: a measurement device $\devicekey'$ and a source
device $\devicesource'$. The latter generates entangled qubits and
sends one of them to the quantum channel and the other to 
$\devicekey'$. The device $\devicekey'$ has two settings
$\{\mathsf{X},\mathsf{Z}\}$ and produces a binary output after one of
the settings is chosen by Bob.

%%%%%%%%%%%%%%%%%%%%%%%%%

\section{Protocol Description} The protocol is parameterized by the secret key length $\ell$, the classical post-processing block size $m_x$, the error rate estimation sample size $m_z$, the local CHSH test sample size $m_j$, the tolerated CHSH value $\Stol$, the tolerated channel error rate $\Qtol$, the tolerated efficiency of Charlie's operation $\eta_{\tn{tol}}$, the error correction leakage $\leak$ and the required correctness $\epscorr$. 

In the following, the first three steps are repeated until the conditions in the sifting step are satisfied. 
\newline
  
\noindent\emph{1.~State preparation and distribution:} Alice selects
an operation mode $h_i\in\{ \Gamma_{\tn{CHSH}},\Gamma_{\tn{QKD}}\}$
where $\Gamma_{\tn{CHSH}}$ is selected with probability
$p_s=\eta_{\tn{tol}} m_j/\big(\eta_{\tn{tol}}
m_j+(\sqrt{m_x}+\sqrt{m_z})^2\big)$ and $\Gamma_{\tn{QKD}}$ is
selected with probability $1-p_s$~\cite{footnote2}. In the following,
we describe $\Gamma_{\tn{CHSH}}$ and $\Gamma_{\tn{QKD}}$ formally for
each of the runs, which we label with indices $i$.

$\Gamma_{\tn{CHSH}}$: Alice measures both halves of the bipartite state. More specifically, she chooses two bit values $u_i, v_i$ uniformly at random, where $u_i$ sets the measurement on the first half to $\mathsf{X}$ or $\mathsf{Z}$ and $v_i$ sets the measurement on the second half to $\mathsf{U}$ or $\mathsf{V}$. The outputs of each measurement are recorded in $s_i$ and $t_i$, respectively.

$\Gamma_{\tn{QKD}}$: Alice selects a measurement setting  $a_i\in\{ \mathsf{X},\mathsf{Z}\}$ with probabilities $p_x=1/(1+\sqrt{(m_z/m_x)})$ and $1-p_x$, respectively~\cite{footnote2}, measures one half of the bipartite state with it and stores the measurement output in $y_i$. The other half of the bipartite state is sent to Charlie.

Similarly, Bob selects a measurement setting  $b_i\in\{ \mathsf{X},\mathsf{Z}\}$ with probabilities $p_x$ and $1-p_x$, respectively, measures one half of the bipartite state with it and stores the measurement output in $y_i'$. The other half of the bipartite state is sent to Charlie. 
\newline

%%%%%%%%%%%%%%%%%%%%%%%%%%%%%%
\noindent\emph{2.~Charlie's operation:} Charlie makes an entangling measurement on the quantum states sent by Alice and Bob, and if it is successful, he broadcasts  $f_i=\tn{pass}$, otherwise he broadcasts  $f_i=\tn{fail}$. Furthermore, if $f_i=\tn{pass}$, then Charlie communicates $g_i\in\{0,1\}^2$ to Alice and Bob. Finally, Alice uses $g_i$ to make correcting bit flip operations.
\newline

%%%%%%%%%%%%%%%%%%%%%%%%
\noindent\emph{3.~Sifting:} Alice and Bob announce their choices $\{h_i\}_i, \{a_i\}_i, \{b_i\}_i$ over an authenticated classical channel and identify the following sets: key generation $\cX:= \{i:(h_i\allowbreak =\Gamma_{\tn{QKD}})\wedge (a_i=b_i=\mathsf{X})\wedge (f_i=\tn{pass})\}$, channel error rate estimation $\cZ:=\{i:(h_i=\Gamma_{\tn{QKD}})\wedge (a_i=b_i=\mathsf{Z})\wedge (f_i=\tn{pass})\}$, and Alice's local CHSH test set, $\cJ:=\{i:h_i=\Gamma_{\tn{CHSH}}\}$.

The protocol repeats steps (1)-(3) as long as $|\cX|<m_x$ or
$|\cZ|<m_z$ or $|\cJ|<m_j$, where $m_x,m_z,m_j \in \mathbb{N}$. We
refer to these as the sifting condition.  \newline

%%%%%%%%%%%%%%%%%%%%%%%%%%%%%
\noindent \emph{4.~Parameter estimation:} To compute the CHSH value from $\cJ$, Alice uses the following formula, $\Stes \allowbreak \coloneqq  \allowbreak 8 \allowbreak \sum_{i\in\cJ} f(u_i,v_i,s_i,t_i)/|\cJ|-4$,
where $f(u_i,v_i,s_i,t_i)=1$ if $s_i \oplus t_i= u_i \wedge v_i$, otherwise $f(u_i,v_i,s_i,t_i)=0$. Next, both Alice and Bob publicly announce the corresponding bit strings $\{y_i\}_{i\in\cZ}$, $\{y_i^\p\}_{i\in\cZ}$ and compute the error rate $\Qtes:=\sum_{i\in\cZ}y_i\oplus y_i^{\prime}/|\cZ|$. Finally, they compute the efficiency of Charlie's operation $\eta:=|\cX|/|\tilde{\cX}|$ where $\tilde{\cX}:=\{i:(h_i=\Gamma_{\tn{QKD}})\wedge (a_i=b_i=\mathsf{X})\}$.
If $\Stes < \Stol$ or $\Qtol<\Qtes$ or $\eta < \eta_{\tn{tol}}$, they abort the protocol. 
\newline

%%%%%%%%%%%%%%%%%%%%%%%%%%%%%%%
\noindent\emph{5.~One-way classical post-processing:} Alice and Bob
choose a random subset of size $m_x$ of $\cX$ for post-processing. An
error correction protocol that leaks at most $\leak$-bits of
information is applied, then an error verification protocol (e.g., this can be implemented with two-universal hashing) that leaks
$\lceil\log_2(1/\epscorr)\rceil$-bits of information is
applied. 
If the error verification fails, they abort the protocol. Finally, Alice and
Bob apply privacy amplification~\cite{Bennett1995} with two-universal
hashing to their bit strings to extract a secret
key of length $\ell$~\cite{RennerThesis}.

\section{Security definition} 
Let us briefly recall the criteria for a generic QKD protocol to be secure. A QKD protocol either aborts or provides Alice and Bob with a pair of key strings, $S_A$ and $S_B$, respectively. If we denote by $E$ the information that the eavesdropper (Eve) gathers during the protocol execution, then the joint state of $S_A$ and $E$ can be described by a classical-quantum state, $\rho_{S_AE}= \allowbreak \sum_s\proj{s}\otimes \rho_E^s$ where $\{\rho_E^s\}_s$ are quantum systems (conditioned on $S_A$ taking values $s$) held by Eve.  The QKD protocol is called $\epscorr$-correct if $\Pr[S_A \not = S_B] \leq \epscorr$, and $\esec$-secret if $(1-p_\tn{abort} )\frac{1}{2}\|\rho_{S_AE}-U_{S_A} \otimes \rho_E \|_1 \leq \esec$ where $p_\tn{abort}$ is the probability that the protocol aborts and $U_{S_A}$ is the uniform mixture of all possible values of the key string $S_A$. Accordingly, we say that the QKD protocol is $(\epscorr+\esec)$-secure if it is both $\epscorr$-correct and $\esec$-secret~\cite{Tomamichel2012a,MullerQuadeRenner,RennerThesis}. Note that this security definition guarantees that the QKD protocol is universally composable~\cite{RennerThesis, MullerQuadeRenner}. That is, the pair of key strings can be safely used in any application (e.g., for encrypting messages) that requires a perfectly secure key (see~\cite{RennerThesis} for more details).

\section{Security analysis}\label{section:SA} 
In the following, we present the main result and a sketch of its proof. For more details about the proof, we refer to the Appendix.

The correctness of the protocol is guaranteed by the error verification protocol which is parameterized by the required correctness $\epscorr$. \newline

\textbf{Main Result.}~\emph{The protocol with parameters $(\ell, m_x,\allowbreak m_z, \allowbreak m_j, \allowbreak \Stol, \allowbreak \Qtol, \eta_{\tn{tol}}, \allowbreak \leak, \allowbreak \epscorr)$ is $\esec$-secret if  \begin{multline} \label{eqn1}
 \ell \leq  m_x \bigg( 1 - \log_2
    \bigg(1+\frac{\hat{S}_{\tn{tol}}}{4\eta_{\tn{tol}}} \sqrt{8 -
      \hat{S}_{\tn{tol}}^{2}} + \frac{\zeta}{\eta_{\tn{tol}}}\bigg) - \tn{h}(\hat{Q}_{\tn{tol}})
    \bigg)  \\ \hspace{2cm} -\leak  -\log_2\frac{1} {\epscorr\eps^4},\end{multline}
  for $\eps \allowbreak =\esec/9$ and $2 \leq \hat{S}_{\tn{tol}} \leq 2\sqrt{2}$, where $\tn{h}$ denotes the binary entropy function, $\hat{S}_{\tn{tol}} :=\Stol- \xi$ and $ \hat{Q}_{\tn{tol}}:= Q_{\tn{tol}}  + \mu$, with the statistical deviations given by
 \begin{align*}
   %%%%%%%%%%%%%%%%%%
    \xi  &\coloneqq \sqrt{\frac{32}{m_j}\ln \frac{1}{\eps } }, \\
      %%%%%%%%%%%%%%%%%%
    \zeta& \coloneqq \sqrt{\frac{2(m_x+m_j\eta)(m_j+1)}{m_xm_j^2}\ln \frac{1}{\eps}},\\  
      %%%%%%%%%%%%%%%%%%
    \mu &  \coloneqq \sqrt{\frac{(m_x+m_z)(m_z+1)}{m_x m_z^2} \ln
      \frac{1}{\eps}}. \end{align*}
} 

\emph{Proof sketch.} Conditioned on passing all the tests in the
parameter estimation step, let $X_A$ be the random variable of length
$m_x$ that Alice gets from ${\cX}$ and let $E'$ denote Eve's
information about $X_A$ at the end of the error correction and error
verification protocols.

We use the following result from~\cite{RennerThesis}.  By using
privacy amplification with two-universal hashing, a $\Delta$-secret
key of length $\ell$ can be generated from $X_A$ with 
\[
\Delta \leq 6\eps+2^{-\frac{1}{2}\big(\HminSmooth{3\eps}{X_A|E'}-\ell\big)-1}
\]
for any $\eps > 0$. Here $\HminSmooth{3\eps}{X_A|E'}$ denotes the
smooth min-entropy~\cite{RennerThesis}. It therefore suffices to bound
this entropy in terms of the tolerated values ($\Stol$, $\Qtol$ and
$\eta_{\tn{tol}}$).

First, using chain rules for smooth entropies~\cite{RennerThesis}, we get 
$
\HminSmooth{3\eps}{X_A|E'} \geq
\HminSmooth{3\eps}{X_A|E} - \tn{leak}_{\tn{EC}}-\log_2(2/\epscorr)$, where $E$ denotes Eve's information after the parameter estimation step. Then, from the generalised entropic uncertainty relation~\cite{esther11}, we further get 
\[\HminSmooth{3\eps}{X_A|E} \geq \log_2 \frac{1}{c^*} -
\HmaxSmooth{\eps}{Z_A|Z_B} - \log_2 \frac{2}{\eps^2},\] 
where $c^*$ is the effective overlap of Alice's measurements (a function of the measurements corresponding to settings $\mathsf{Z},\mathsf{X}$ and the marginal state). Here, $Z_A$ can be seen as the bit string Alice would have obtained if she had chosen setting $\mathsf{Z}$ instead. Likewise, $Z_B$ represents the bit string obtained by Bob with setting $\mathsf{Z}$. From Ref.~\cite{Tomamichel2012a}, the smooth max-entropy of the alternative measurement is bounded by the error rate sampled on the set $\cZ$ of size $m_z$, 
$\HmaxSmooth{\eps}{Z_A|Z_B} \leq m_x \tn{h}(Q_{\tn{tol}} + \mu)$,
where $\mu$ is the statistical deviation due to random sampling theory, i.e., with high probability, the error rate between $Z_A$ and $Z_B$ is smaller than $Q_{\tn{tol}} + \mu$. 

It remains to bound the effective overlap $c^*$ with $\Stol$ and
$\eta_{\tn{tol}}$. First, we note that $\tilde{\cX}$ is independent of
Charlie's outputs and $\cX\subseteq \tilde{\cX}$ with equality only if
Charlie always outputs a pass. Furthermore, $\cX$ is not necessarily a
random subset of $\tilde{\cX}$ as a malicious Charlie can control the
content of $\cX$ (this is discussed later). Assuming the worst case
scenario, it can be shown that $ c^*\leq
1/2+\left(\tilde{c}^*-1/2\right)/\eta,$ where
$\eta=|\cX|/|\tilde{\cX}|$ is the efficiency of Charlie's operation
and $\tilde{c}^*$ is the effective overlap of $\tilde{\cX}$. Next, by
establishing a relation between the effective overlap and the local
CHSH test~\cite{esther11} (for completeness, we provide a more concise
proof in Lemma~\ref{lem:chsh_c} in the Appendix) and
using random sampling theory, we further obtain
\[\tilde{c}^*\leq	\frac{1}{2}\bigg(1+\frac{ (S_{\tn{tol}}-\xi)}{4}\sqrt{8- \big(S_{\tn{tol}}-\xi\big)^2}+\zeta\bigg).\] 
Here $\xi$ quantifies the statistical deviation between the expected CHSH value and the observed CHSH value, and $\zeta$ quantifies the statistical deviation between the effective overlaps of $\tilde{\cX}$ and $\cJ$, respectively.

\begin{figure}[t!]
  \centering  
  \includegraphics[width=87mm]{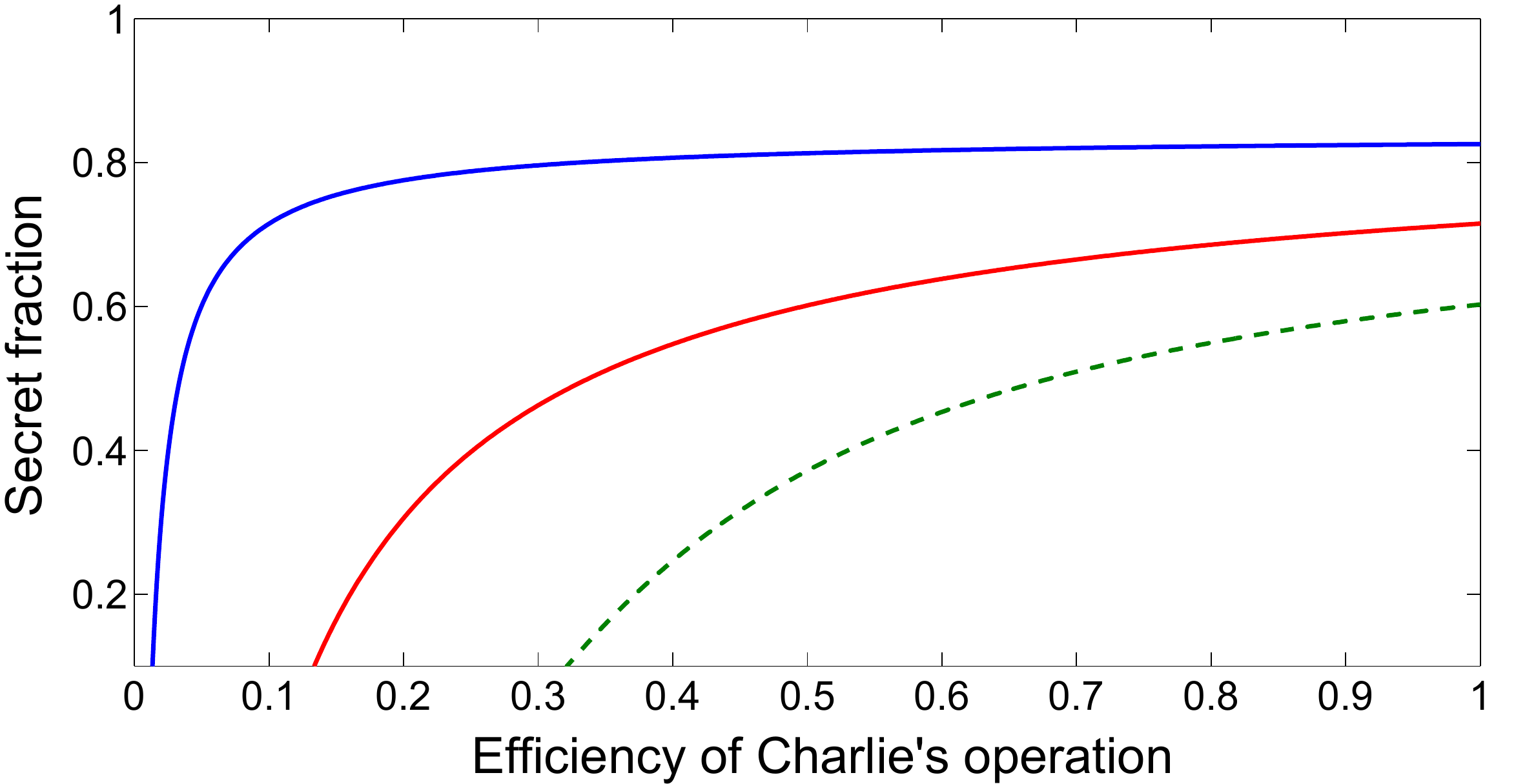}
  \caption{\textbf{Secret fraction $\ell/m_x$ as a function of the tolerated efficiency of Charlie's operation $\eta_{\tn{tol}}$ (including channel losses).} We consider a depolarising channel with a fixed error rate $\Qtol=1\%$ and $\Stol=V2\sqrt{2}$. The solid curves (asymptotic rates, Eq.~(\ref{eqn2})) are obtained with $V=0.999$ and $V=0.99$ from left to right. The right dashed curve (finite-key analysis, Eq.~(\ref{eqn1})) is obtained by choosing $\Stol=2\sqrt{2}$, $\leak=m_x1.1\tn{h}(\Qtol+\mu)$, $\esec=10^{-8}$ and $\epscorr=10^{-12}$, where the classical post-processing block size $m_x$ is of the order of $10^8$ bits. }\label{Fig2}
\end{figure}

Putting everything together, we obtain the secret key length as stated
by Eq.~(\ref{eqn1}).
\newline

\emph{Asymptotic limit.} In the following, we consider the secret
fraction defined as $f_{\tn{secr}}:=\ell/m_x$~\cite{Gisin2002}. In the
asymptotic limit $N \rightarrow \infty$ and using $\leak \rightarrow
\tn{h}(\Qtol)$ (corresponding to the Shannon limit), it is easy to
verify that the secret fraction reaches
\be \label{eqn2}f_{\tn{secr}}=1 - \log_2 \left (1+
  \frac{\Stol}{4\eta_{\tn{tol}}} \sqrt{8 - \Stol^{2}}
\right)-2\tn{h}(\Qtol).\ee The expression reveals the roles of the
modes of operation $\Gamma_{\tn{CHSH}}$ and $\Gamma_{\tn{QKD}}$. The
first provides a bound on the quality of the devices (which is taken
into account by the $\log_2$ term) and the latter, apart from
generating the actual key, %gives a bound on the bit error rate and therefore 
is a measure for the quality of the quantum channel.

\section{Discussion} 

We have proposed a DIQKD protocol which provides security even if the
losses of the channel connecting Alice and Bob would not allow for a
detection-loophole free Bell test. Nevertheless, the security of the
protocol still depends on the losses and the protocol therefore
needs to perform a check to ensure that Charlie does not output a fail too
often. This dependence from the failure probability arises from the
fact that a malicious Charlie may choose to output a pass only when
Alice and Bob's devices behave badly. 
Therefore, the CHSH value calculated from Alice's CHSH sample is not a reliable estimate for the overlap of the sample used to generate the key string. However, with the CHSH test, Alice
can estimate how often her devices behave badly and thus determine the minimum tolerated efficiency (or the maximum tolerated failure probability) of Charlie. This is illustrated in Fig.~\ref{Fig2} where
large values of $\Stol$ are required to tolerate small values of
$\eta_{\tn{tol}}$.

Taking the asymptotic limit and the maximal CHSH value, we see that
the secret fraction is independent of $\eta_{\tn{tol}}$, which is not
so surprising since the maximal CHSH value implies that the
devices of Alice are behaving ideally all the time. Remarkably, we recover the
asymptotic secret fraction for the BB84 protocol~\cite{SP00}. 

From a practical point of view, the possibility to consider very small
values of $\eta_{\tn{tol}}$ is certainly appealing, since it suggests
that the distance between Alice and Bob can be made very large. A
quick calculation using the best experimental values~\cite{ExpMDIQKD} (i.e.,
$\eta_{\tn{tol}}\approx t/2$ and
$\Stol \approx 2.81$ where $t$ is the channel transmission) shows that
the secret fraction is positive for $t > 0.45$. This translates to
about a $17$km optical fiber between Alice and Bob. Accordingly, to
achieve larger distances, we would need a local CHSH test that
generates violations larger than those achieved by current
experiments.

\section{Conclusion}

In summary, we provide an alternative approach towards DIQKD, where
the Bell test is not carried out between Alice and Bob but rather in
Alice's laboratory. On a conceptual level, our approach departs from
the general belief that the observation of a Bell violation between
Alice and Bob is necessary for DIQKD. On the practical side, it offers
the possibility to replace the extremely challenging task of
implementing a long distance detection-loophole free Bell test with a
less challenging task, i.e., implementing a local detection-loophole free Bell test. In fact, recently, there has been very
encouraging progress towards the implementation of a local detection-loophole free CHSH test~\cite{Giustina2013}. In view of that, we
believe an experimental demonstration of DIQKD with local Bell tests
is plausible in the near future.  \newline

\textbf{Acknowledgments.} We thank Jean-Daniel Bancal, Marcos Curty,
Esther H\"{a}nggi, Stefano Pironio, Nicolas Sangouard, Valerio Scarani
and Hugo Zbinden for helpful discussions. We acknowledge support from
the National Centre of Competence in Research QSIT, the Swiss NanoTera
project QCRYPT, the Swiss National Science Foundation SNSF (grant
No. 200020-135048), the CHIST-ERA project DIQIP, the FP7 Marie-Curie
IAAP QCERT project and the European Research Council (grant No. 258932
and No. 227656).

\appendix
\setcounter{equation}{0}
\setcounter{thm}{0}
\section*{Appendix: D\lowercase{etails of Security analysis}} 
We present the proof for the main result given in the main text. First, we discuss about the assumptions and then introduce the necessary technical lemmas. Second, we establish a relation between the local CHSH test and a generalized version of smooth entropic uncertainty relation (Lemma \ref{lem:chsh_c}). Third, we provide the required statistical statements for estimating certain quantities of the bit strings of Alice and Bob. Finally, we state our main result (Theorem \ref{securitythm}) which is slightly more general than the main result presented above. 

\subsection{Notations}
%%%%%%%%% Notations
We assume that all Hilbert spaces denoted by $\hilbert$, are finite-dimensional. For composite systems, we define the tensor product of $\hilbert_A$ and $\hilbert_B$ as $\hilbert_{AB}:=\hilbert_A \otimes \hilbert_B$. We denote $\cP(\hilbert)$ as the set of positive semi-definite operators on $\hilbert$ and $\cS(\hilbert)$ as the set of normalised states on $\hilbert$, i.e., $\cS(\hilbert)=\{\rho\in\cP(\hilbert):\trace{\rho}=1\}$. Furthermore, for a composite state $\rho_{AB}\in \cS(\hilbert_{AB})$, the reduced states of system A and system B are given by $\rho_A=\trace[B]{\rho_{AB}}$ and $\rho_B=\trace[A]{\rho_{AB}}$, respectively. A positive operator valued measure (POVM) is denoted by $\mathds{M}:=\{M_x\}_x$ where $\sum_xM_x=\mathds{1}$. For any POVM, we may view it as a projective measurement by introducing an ancillary system, thus for any POVM with binary outcomes, we may write it as an observable $O=\sum_{x\in\{0,1\}} (-1)^x M_x$, such that $\sum_{x\in\{0,1\}} M_x=\mathds{1}$. We also use $\bar{x}:=(x_1,x_2,\ldots,x_n)$ to represent the concatenations of elements and $[n]$ to denote $\{1,2,\ldots,n\}$. The binary entropy function is denoted by $\tn{h}(x):=-x\log_2x-(1-x)\log_2(1-x)$.

\subsection{Basic assumptions on Alice's and Bob's abilities}\label{Ass}
Prior to stating the security proof, it is instructive to elucidate the basic assumptions necessary for the security proof. In particular, the assumptions are detailed in the following:

\begin{enumerate}
 \item \textbf{Trusted local sources of randomness.} Alice (also Bob) has access to a trusted source that produces a random and secure bit value upon each use. Furthermore, we assume the source is unlimited, that is, Alice can use it as much as she wants, however the protocol only requires an amount of randomness linear in the number of quantum states generated. \label{A1}

 \item \textbf{An authenticated but otherwise insecure classical channel.}\label{A2} Generally, this assumption is satisfied if Alice and Bob share an initial short secret key~\cite{WC81, Stinson94}. Note that the security analysis of such authentication schemes was recently extended to the universally composable framework~\cite{RennerThesis,MullerQuadeRenner} in Ref~\cite{Portmann2012}, which allows one to compose the error of the authentication scheme with the errors of the protocol, giving an overall error on the security.

  \item \textbf{No information leaves the laboratories unless the protocol allows it.}\label{A3} This assumption is paramount to any cryptographic protocol. It states that information generated by the legitimate users is appropriately controlled. More concretely, we assume the followings
\begin{enumerate}
  \item \emph{Communication lines.---} The only two communication lines leaving the laboratory are the classical and the quantum channel. Furthermore, the classical channel is controlled, i.e., only the information required by the protocol is sent.
  \item \emph{Communication between devices.---} There should be no unauthorized communication between any devices in the laboratory, in particular from the measurement devices to the source device. 
\end{enumerate}

\item \textbf{Trusted classical operations.} Classical operations like authentication, error correction, error verification, privacy amplification, etc must be trusted, i.e., we know that the operations have ideal functionality and are independent of the adversary.\label{A4}

\item  \textbf{Measurement and source devices are causally independent.}  \label{A5} This means each use of the device is independent of the previous uses. For example, for $N$ uses of a source device and a measurement that produces a bit string $\bar{x}:=(x_1,x_2,\ldots,x_n)$, we have \[\rho^N=\bigotimes_{i=1}^N\rho^i,\quad M_{\bar{x}} = \bigotimes_i M^i_{x_i}\] where $ M_{\bar{x}}$ is the POVM element corresponding to the outcome $\bar{x}$. 
\end{enumerate}

\subsection{Technical lemmas}

\begin{lem}[Jordan's lemma~\cite{Masanes2006, pironio09}]\label{lem:jordan} Let $O$ and $O^{\prime}$ be observables with eigenvalues $\pm 1$ on Hilbert space $\hilbert$. Then there exists a partition of the Hilbert space, $\hilbert=\bigoplus_i\hilbert_i$, such that
\[
O=\bigoplus_{i}O_i \quad \tn{and} \quad  O^{\prime}=\bigoplus_iO^{\prime}_i 
\]
where $\hilbert_i$ satisfies $\tn{dim}(\hilbert_i) \leq 2$ for all $i$.
\end{lem}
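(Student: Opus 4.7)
The plan is to analyse the unitary $U \coloneqq OO'$ and exploit the intertwining relation $OUO = U^{-1}$, which forces $O$ (and symmetrically $O'$) to swap the eigenspaces of $U$ associated with complex-conjugate eigenvalues. Since $O$ and $O'$ have eigenvalues $\pm 1$, both are self-adjoint and unitary with $O^{2}=(O')^{2}=\mathds{1}$, so $U$ is unitary with $U^{-1}=O'O$, and $OUO = O\cdot OO'\cdot O = O'O = U^{-1}$ follows in one line; the same identity holds for $O'$. Writing the spectral decomposition $\hilbert = \bigoplus_{\mu}\hilbert_{\mu}$ with $\hilbert_{\mu}\coloneqq \ker(U-\mu\mathds{1})$, the intertwining relation shows that $O$ restricts to an isometry from $\hilbert_{\mu}$ onto $\hilbert_{\bar\mu}$, and likewise for $O'$.

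I would then handle the real eigenvalues $\mu\in\{+1,-1\}$ and the non-real ones separately. For $\mu=\pm 1$ the subspace $\hilbert_{\mu}$ is itself invariant under both $O$ and $O'$; the identity $OO'=\mu\mathds{1}$ on $\hilbert_{\mu}$ rearranges to $O'=\mu O$, so the two observables are simultaneously diagonalisable there and $\hilbert_{\mu}$ splits into one-dimensional joint eigenspaces. For a non-real $\mu$, pair $\hilbert_{\mu}$ with $\hilbert_{\bar\mu}$, fix any orthonormal basis $\{\ket{\psi_k}\}$ of $\hilbert_{\mu}$, and set $W_k\coloneqq \mathrm{span}\{\ket{\psi_k},O\ket{\psi_k}\}$. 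Then $O^{2}=\mathds{1}$ gives $O$-invariance of $W_k$, while $OO'\ket{\psi_k}=\mu\ket{\psi_k}$ yields $O'\ket{\psi_k}=\mu O\ket{\psi_k}\in W_k$, and the fact that $O\ket{\psi_k}\in\hilbert_{\bar\mu}$ gives $O'O\ket{\psi_k}=\bar\mu\ket{\psi_k}\in W_k$; so $W_k$ is $O'$-invariant as well. Unitarity of $O$ keeps distinct $W_k$ mutually orthogonal, producing an orthogonal decomposition of $\hilbert_{\mu}\oplus\hilbert_{\bar\mu}$ into two-dimensional joint-invariant subspaces.

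Combining the one-dimensional pieces from the real eigenvalues with the two-dimensional pieces from each conjugate pair $(\mu,\bar\mu)$ will then give the claimed decomposition $\hilbert=\bigoplus_i\hilbert_i$ with $\dim\hilbert_i\le 2$. The delicate point is verifying that the seemingly ad hoc choice $W_k=\mathrm{span}\{\ket{\psi_k},O\ket{\psi_k}\}$ is actually closed under $O'$; this reduces to the two identities above and ultimately to the intertwining $OUO=U^{-1}$, which is precisely what prevents the naive orbit $\{\ket{\psi_k},O\ket{\psi_k},O'\ket{\psi_k},O'O\ket{\psi_k},\dots\}$ from expanding to dimension four instead of collapsing to dimension two.
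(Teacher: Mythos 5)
Your proof is correct. Note that the paper does not actually prove this lemma---it is stated as a known result with citations to Masanes and to Pironio \emph{et al.}---so there is no in-paper argument to compare against; your write-up is essentially the standard proof of Jordan's lemma via the unitary $U=OO'$. All the key steps check out: $OUO=U^{-1}$ and $O'UO'=U^{-1}$ follow from $O^2=(O')^2=\mathds{1}$; hence $O$ and $O'$ map $\ker(U-\mu\mathds{1})$ isometrically onto $\ker(U-\bar{\mu}\mathds{1})$; the real eigenvalues give $O'=\mu O$ on an invariant subspace and hence one-dimensional joint blocks; and for non-real $\mu$ the spaces $W_k=\mathrm{span}\{\ket{\psi_k},O\ket{\psi_k}\}$ are two-dimensional and invariant under both observables. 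The only detail worth making explicit is the mutual orthogonality of the $W_k$: besides unitarity of $O$ (which gives $\langle O\psi_k|O\psi_j\rangle=0$) you also need $\langle \psi_k|O\psi_j\rangle=0$, which holds because $\hilbert_\mu\perp\hilbert_{\bar{\mu}}$ for $\mu\neq\bar{\mu}$ ($U$ being normal); together with surjectivity of $O:\hilbert_\mu\to\hilbert_{\bar{\mu}}$ this shows the $W_k$ exhaust $\hilbert_\mu\oplus\hilbert_{\bar{\mu}}$, completing the decomposition.
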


% Chernoff-Hoeffding lemma
\begin{lem}[Chernoff-Hoeffding~\cite{hoeffding63}]
\label{lem:chernoff}
Let $X \coloneqq \frac{1}{n} \sum_i X_i$ be the average of $n$
\emph{independent} random variables $X_1,X_2,\dotsc,X_n$ with values
in $[0,1]$, and let $\mu \coloneqq \E[X] = \frac{1}{n} \sum_i \E[X_i]$
denote the expected value of $X$. Then, for any $\delta > 0$,
\[\Pr \left[X - \mu \geq \delta \right] \leq \exp(-2\delta^2n).\]
\end{lem}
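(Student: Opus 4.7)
The plan is to apply the exponential moment (Chernoff) method and reduce the one-sided deviation bound to a product of moment generating functions, which independence lets me evaluate factor-by-factor. First, for any free parameter $s > 0$, monotonicity of the exponential together with Markov's inequality gives
\[
\Pr[X - \mu \geq \delta] = \Pr\!\bigl[e^{sn(X-\mu)} \geq e^{sn\delta}\bigr] \leq e^{-sn\delta}\, \E\!\bigl[e^{sn(X-\mu)}\bigr].
\]
Writing $n(X - \mu) = \sum_{i=1}^{n}(X_i - \mu_i)$ with $\mu_i := \E[X_i]$, the independence hypothesis then factorizes the expectation as $\prod_{i=1}^{n} \E[e^{s(X_i - \mu_i)}]$.

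The crux of the argument is the per-factor bound, for which I would invoke Hoeffding's lemma: any zero-mean random variable $Y$ taking values in an interval $[a,b]$ satisfies $\E[e^{sY}] \leq \exp(s^2(b-a)^2/8)$. Since $X_i \in [0,1]$ by hypothesis, the centered variable $Y_i := X_i - \mu_i$ lies in an interval of length at most $1$, so each factor is at most $\exp(s^2/8)$, giving the aggregate bound $\E[e^{sn(X-\mu)}] \leq \exp(ns^2/8)$. Hoeffding's lemma itself I would prove by using convexity of $y \mapsto e^{sy}$ on $[a,b]$ to dominate it by the linear interpolation between $e^{sa}$ and $e^{sb}$, taking expectation (which kills the linear part via $\E[Y]=0$), and then doing a short calculus estimate on the resulting auxiliary function of $s$. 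This convexity-plus-calculus step is the one substantive piece of work in the proof, and is where any real technical obstacle lies.

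Combining the above pieces yields $\Pr[X - \mu \geq \delta] \leq \exp(-sn\delta + ns^2/8)$, valid for every $s > 0$. Finally I would optimize over $s$: the exponent is a quadratic minimized at $s = 4\delta$, and substituting this value produces the stated bound $\exp(-2\delta^2 n)$. The argument requires no care at the boundary $\delta = 0$ since the inequality is trivial there, and no use is made of the specific distributions of the $X_i$ beyond independence and the interval constraint, which is precisely why the lemma will be applicable in the security analysis to empirical averages arising from the CHSH test, the error-rate sample, and Charlie's efficiency estimate.
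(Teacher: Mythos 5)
Your proof is correct: the exponential-moment/Markov step, the factorization by independence, the per-factor bound via Hoeffding's lemma (interval of length $1$, so each factor is at most $e^{s^2/8}$), and the optimization at $s=4\delta$ all check out and yield exactly $\exp(-2\delta^2 n)$. The paper offers no proof of this lemma at all---it is stated as a known result with a citation to Hoeffding (1963)---and your argument is precisely the standard one from that reference, so there is nothing further to compare.
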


% Serfling lemma
\begin{lem}[Serfling~\cite{serfling74}]
\label{lem:serfling}
Let $\{x_1,\dotsc,x_n\}$ be a list of (not necessarily distinct)
values in $[a,b]$ with average $\mu \coloneqq \frac{1}{n}\sum_i x_i$. Let the
random variables $X_1,X_2,\dotsc,X_k$ be obtained by sampling $k$
random entries from this list without replacement. Then, for any $\delta >
0$, the random variable $X \coloneqq \frac{1}{k} \sum_i X_i$ satisfies
\[\Pr \left[X - \mu \geq \delta \right] \leq  \exp
\left(\frac{-2\delta^2kn}{(n-k+1)(b-a)} \right).\]
\end{lem}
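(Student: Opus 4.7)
The plan is to prove this concentration inequality for sampling without replacement via a moment generating function (Chernoff-style) argument adapted to the finite-population setting. By Markov's inequality applied to $\exp\!\bigl(t\sum_{i=1}^k (X_i - \mu)\bigr)$ with a free parameter $t>0$, the probability in question is bounded by $e^{-tk\delta}\,\E\!\bigl[\exp\!\bigl(t\sum_{i=1}^k(X_i-\mu)\bigr)\bigr]$; the task reduces to bounding this MGF and optimizing over $t$.

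The first structural input is Hoeffding's reduction theorem (from the same 1963 paper as Lemma~\ref{lem:chernoff}), which says that for any convex $\phi$,
\[
\E\!\left[\phi\!\left(\textstyle\sum_{i=1}^k X_i\right)\right]_{\tn{WOR}} \;\leq\; \E\!\left[\phi\!\left(\textstyle\sum_{i=1}^k Y_i\right)\right]_{\tn{WR}},
\]
where on the right $Y_1,\dotsc,Y_k$ are drawn i.i.d.\ from the empirical distribution on $\{x_1,\dotsc,x_n\}$. Applied to $\phi(s)=e^{ts}$ together with the standard single-variable bound $\E[e^{t(Y_i-\mu)}]\leq e^{t^2(b-a)^2/8}$ and the optimization $t=4\delta/(b-a)^2$, this already yields the weaker Chernoff bound with exponent $-2k\delta^2/(b-a)^2$, matching Lemma~\ref{lem:chernoff} but missing the Serfling improvement.

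To recover the extra factor $n/(n-k+1)$, which quantifies the benefit of sampling a finite population rather than taking i.i.d.\ draws, I would follow Serfling's original martingale-style argument. Expose the draws one at a time and form the Doob martingale $D_j \coloneqq \E[S_k\mid X_1,\dotsc,X_j]-\E[S_k\mid X_1,\dotsc,X_{j-1}]$, so that $S_k - k\mu = \sum_{j=1}^k D_j$. At step $j$ the conditional distribution of the remaining draws is uniform without replacement from a pool of $n-j+1$ elements; a direct calculation shows that $|D_j|$ is controlled by a range of order $(b-a)(n-k)/(n-j+1)$. Inserting these refined ranges into an Azuma/Hoeffding-type MGF bound, summing from $j=1$ to $k$, and telescoping the resulting sum produces exactly the desired $n/(n-k+1)$ factor in the exponent.

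The hard part is pinning down the martingale-increment ranges with the sharp finite-population constants: a naive range bound loses precisely the $(n-k+1)$ factor one is trying to extract. The essential trick is a reverse-order/exchangeability observation used by Serfling, equivalently the fact that the $j$-th draw in a without-replacement sample is still marginally uniform on the population, which converts the conditional variances into the finite-population correction factor $(n-k)/(n-1)$ cleanly. Once this range bound is established, the remainder is routine: combine with Hoeffding's lemma on each $D_j$ and optimize $t$, yielding the stated exponent.
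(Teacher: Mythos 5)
The paper does not actually prove this lemma: it is imported verbatim from Serfling's 1974 paper, so there is no in-house argument to compare against. Your sketch is essentially a reconstruction of the standard proof, and the approach is sound. The route you describe --- expose the draws one at a time, form the Doob martingale of $\E[S_k \mid X_1,\dots,X_j]$, and feed the refined increment ranges into an Azuma--Hoeffding MGF bound --- does work: conditionally on the first $j-1$ draws, $\E[S_k\mid X_1,\dots,X_j]$ depends on $X_j$ only through the coefficient $(n-k)/(n-j)$, so the $j$-th increment has conditional range $(b-a)(n-k)/(n-j)$, and the one computation you only gesture at, namely
\[
\sum_{j=1}^{k}\frac{1}{(n-j)^2}\;\le\;\sum_{j=1}^{k}\Bigl(\frac{1}{n-j}-\frac{1}{n-j+1}\Bigr)\;=\;\frac{1}{n-k}-\frac{1}{n}\;=\;\frac{k}{n(n-k)},
\]
is exactly the telescoping step that gives $\sum_{j}(n-k)^2/(n-j)^2\le k(n-k+1)/n$ and hence the stated exponent after optimizing $t$. (Serfling's own argument instead uses the forward martingale $(S_k-k\mu)/(n-k)$ and a recursive bound on its MGF, but the mechanism is the same.) Your appeal to the finite-population variance correction $(n-k)/(n-1)$ is a red herring: the correction factor in the exponent is $(n-k+1)/n$ and comes from the range bound above, not from conditional variances --- though this does not damage the argument. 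Two further remarks. First, your proof (and Serfling's theorem) yields $(b-a)^2$ in the denominator of the exponent; the lemma as printed has $(b-a)$, which is evidently a typographical slip, harmless in this paper because it is only applied with $[a,b]=[0,1]$. Second, the first half of your proposal (Hoeffding's convex-ordering reduction to sampling with replacement) is correct but, as you yourself note, cannot produce the $n/(n-k+1)$ gain, so it is dispensable.
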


\begin{cor} \label{Cor:serfling}Let $\cX:=\{x_1,\dotsc,x_n\}$ be a list of (not necessarily distinct) values in $[0,1]$ with the average $\mu_{\cX}:=\frac{1}{n}\sum_{i=1} x_i$. Let $\cT$ of size $k$ be a random subset of $\cT$ with the average $\mu_{\cT}:=\frac{1}{t}\sum_{i\in \cT} x_i$. Then for any $\eps > 0$, the set $\cK=\cX \setminus \cT$ with average $\mu_{\cK}=\frac{1}{n-t}\sum_{i\in\cK} x_i$ satisfies
\[\Pr\left[\mu_{\cK}-\mu_{\cT}\geq \sqrt{\frac{n(t+1)}{2(n-t)t^2}\ln\frac{1}{\eps}}\right] \leq \eps
\]
\end{cor}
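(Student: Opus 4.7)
\emph{Proof plan.} The key observation is a symmetry: when $\cT$ is a uniformly random subset of $\cX$ of size $t$, the complement $\cK = \cX \setminus \cT$ is itself a uniformly random subset of size $n - t$. I would therefore apply Serfling's inequality (Lemma~\ref{lem:serfling}) to $\cK$ rather than to $\cT$.

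Next I would rewrite the deviation $\mu_{\cK} - \mu_{\cT}$ in terms of $\mu_{\cK} - \mu_{\cX}$, using the identity $n\mu_{\cX} = t\mu_{\cT} + (n-t)\mu_{\cK}$. Solving for $\mu_{\cT}$ and substituting yields
\[\mu_{\cK} - \mu_{\cT} \;=\; \frac{n}{t}\bigl(\mu_{\cK} - \mu_{\cX}\bigr),\]
so the event $\{\mu_{\cK} - \mu_{\cT} \geq \delta'\}$ coincides exactly with $\{\mu_{\cK} - \mu_{\cX} \geq \delta' t/n\}$. The point of rewriting it this way (as opposed to relating things to $\mu_{\cT} - \mu_{\cX}$) is that it matches the sample size $n-t$ that will appear in Serfling's bound, producing the factor $t+1$ in the denominator of the final expression.

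Then I would apply Lemma~\ref{lem:serfling} to $\cK$ as a size-$(n-t)$ random sample drawn without replacement from $\cX \subseteq [0,1]$ (so $b - a = 1$). This gives
\[\Pr\bigl[\mu_{\cK} - \mu_{\cX} \geq \delta\bigr] \;\leq\; \exp\!\left(\frac{-2\delta^{2}(n-t)n}{t+1}\right).\]
Setting $\delta = \delta' t/n$ and choosing $\delta'$ so that the right-hand side equals $\eps$, a short algebraic rearrangement gives
\[\delta' \;=\; \sqrt{\frac{n(t+1)}{2(n-t)\,t^{2}}\,\ln\frac{1}{\eps}},\]
which is exactly the threshold in the statement. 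I do not anticipate any real obstacle: the only subtlety is choosing to bound $\cK$ rather than $\cT$ with Serfling, which is what makes the constant in the numerator come out as $t+1$ instead of $n-t+1$. The rest is bookkeeping.
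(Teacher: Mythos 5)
Your proposal is correct and follows essentially the same route as the paper: the paper likewise applies Serfling to the complement $\cK$ (of size $n-t$, yielding the $t+1$ in the denominator) and then uses the identity $\mu_{\cX}=\frac{t}{n}\mu_{\cT}+\frac{n-t}{n}\mu_{\cK}$ to convert $\mu_{\cK}-\mu_{\cX}$ into $\frac{t}{n}(\mu_{\cK}-\mu_{\cT})$. Your write-up is just a more explicit version of the paper's two-line argument, including the symmetry observation that the paper leaves implicit.
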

\begin{proof} Since $\cT$ is a random sample of $\cX$, from Lemma \ref{lem:serfling}, we have
\[\Pr \left[\mu_{\cK} - \mu_{\cX} \geq \delta \right] \leq  \exp
\left(\frac{-2\delta^2(n-t)n}{(t+1)} \right)=\eps.\]
Using $\mu_{\cX}=\frac{t}{n}\mu_{\cT}+\frac{n-t}{n}\mu_{\cK}$ we finish the proof.
\end{proof}

The main ingredient is a fine-grained entropic uncertainty relation (see~\cite[Corollary~7.3]{Tomamichel2012thesis} and~\cite{esther11}).
%Generalized UCR
\begin{lem}
\label{thm:ucr}
Let $\eps >0, \bar{\eps} \geq 0$ and $\rho \in \sno{ABC}$. Moreover let
$\mathds{M} = \{M_x\}$, $\mathds{N} = \{N_z\}$ be POVMs on
$\hilbert_A$, and $\mathds{K} = \{P_k\}$ a projective measurement on
$\hilbert_A$ that commutes with both $\mathds{M}$ and $\mathds{N}$. Then the post-measurement states
$\rho_{XB}= \sum_{x} \proj{x} \otimes \trace[AC]{\sqrt{M_x}
  \rho_{ABC}\sqrt{M_x}}$, $
  \rho_{ZC} = \sum_{z} \proj{z} \otimes\trace[AB]{\sqrt{N_z}
  \rho_{ABC}\sqrt{N_z}}
$ satisfy
\begin{multline}
  \label{eq:ucr}
  \HminSmooth[\rho]{2\eps+\bar{\eps}}{X|B} +
  \HmaxSmooth[\rho]{\eps}{Z|C}  \\ \geq \log_2
  \frac{1}{c^*(\rho_A,\mathds{M},\mathds{N})} - \log_2 \frac{2}{\bar{\eps}^2},
\end{multline}
where the effective overlap is defined as
\begin{multline}
\label{eq:effective}
c^*(\rho_A,\mathds{M},\mathds{N})  \\  \coloneqq  \min_{\mathds{K}}  \left\{\sum_k
\trace{P_k \rho} \max_x \inftynorm{P_k\sum_z
  N_{z}M_{x}N_{z}} \right\}
\end{multline}
\end{lem}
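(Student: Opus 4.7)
Since $\mathds{K} = \{P_k\}$ is a projective measurement on $\hilbert_A$ that commutes with both $\mathds{M}$ and $\mathds{N}$, pinching $\rho_{ABC}$ by $\mathds{K}$ preserves every joint outcome statistic of $X$ and $Z$. I would begin by assuming (without loss of generality) that $\rho = \sum_k (P_k\otimes \I_{BC})\rho(P_k\otimes\I_{BC})$, introducing a classical register $K$ that records the outcome of $\mathds{K}$, and defining the subnormalised per-block states $\rho^{(k)} \coloneqq (P_k \otimes \I_{BC})\,\rho\,(P_k \otimes \I_{BC})$ of weight $p_k \coloneqq \trace{\rho^{(k)}}$. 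Appending $K$ to the conditioning registers on the left-hand side is free, because $K$ is a classical function of the pinched $A$ and so data processing gives $\HminSmooth[\rho]{2\eps+\bar\eps}{X|B} \geq \HminSmooth[\rho]{2\eps+\bar\eps}{X|BK}$, and analogously for the max-entropy.

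Next, I would apply the standard smooth entropic uncertainty relation with quantum side information (a subnormalised-state generalisation of Berta~\textit{et al.}\ such as the one appearing in Tomamichel's thesis) inside each block. The refinement that replaces the crude POVM overlap $\max_{x,z}\inftynorm{\sqrt{M_x}\sqrt{N_z}}^{2}$ by the sharper $\max_x\inftynorm{\sum_z N_z M_x N_z}$ is available because the $Z$-outcome is actually realised on the max-entropy side, so the ``measurement-obscured'' form of $M_x$ is what controls the overlap. Applied to $\rho^{(k)}$, which lives entirely in the block $P_k \hilbert_A \otimes \hilbert_{BC}$, this yields
\begin{equation*}
\HminSmooth[\rho^{(k)}]{2\eps+\bar\eps}{X|B} + \HmaxSmooth[\rho^{(k)}]{\eps}{Z|C} \geq \log_2 \frac{1}{c_k} - \log_2 \frac{2}{\bar\eps^{2}},
\end{equation*}
with $c_k \coloneqq \max_x \inftynorm{P_k \sum_z N_z M_x N_z}$.

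The last step lifts these per-block estimates to the joint state. Because $K$ is classical, the smooth min/max entropies conditioned on $(B,K)$ and $(C,K)$ decompose as weighted combinations of the per-block entropies, and, combined with the per-block uncertainty bound above, this produces the weighted-average overlap $\sum_k p_k\, c_k$ on the right-hand side of the inequality for $\HminSmooth[\rho]{2\eps+\bar\eps}{X|BK} + \HmaxSmooth[\rho]{\eps}{Z|CK}$. Minimising over commuting projective $\mathds{K}$ then recovers $c^*$. The \textbf{main obstacle} is precisely this combination step: a naive Jensen bound on $\log(1/c_k)$ would point in the wrong direction, so one must instead work at the operator level with the $\DmaxOp$ and purified-distance characterisations of $\HminOp^{\eps}$ and $\HmaxOp^{\eps}$, and verify that the smoothing parameters on the two sides can be distributed across the blocks without accumulating any extra loss.
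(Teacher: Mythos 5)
Note first that the paper does not prove this lemma at all: it is imported as a known result from Corollary~7.3 of Tomamichel's thesis and from Tomamichel--H\"anggi, so the relevant comparison is with the proofs given there. Your preparatory steps are sound: pinching $\rho$ by $\mathds{K}$ leaves $\rho_{XB}$, $\rho_{ZC}$ and every $\trace{P_k\rho}$ unchanged because $\mathds{K}$ commutes with both measurements, and since conditioning on the additional classical register $K$ can only decrease both the smooth min- and the smooth max-entropy, it indeed suffices to lower-bound $\HminSmooth{2\eps+\bar{\eps}}{X|BK}+\HmaxSmooth{\eps}{Z|CK}$.

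The genuine gap is the combination step, and it is not a deferrable technicality: proving an uncertainty relation separately in each Jordan block and then recombining cannot produce the averaged overlap $\sum_k p_k c_k$. For a classical $K$ one has the exact decompositions $2^{-H_{\min}(X|BK)}=\sum_k p_k\,2^{-H_{\min}(X|B)_k}$ and $2^{H_{\max}(Z|CK)}=\sum_k p_k\,2^{H_{\max}(Z|C)_k}$, so the per-block inequalities $H_{\min}(X|B)_k+H_{\max}(Z|C)_k\geq\log_2(1/c_k)$ only yield $2^{-H_{\min}(X|BK)}\leq\sum_k p_k\,c_k\,2^{H_{\max}(Z|C)_k}$, which by H\"older collapses to $(\max_k c_k)\,2^{H_{\max}(Z|CK)}$ --- the worst-case overlap, i.e.\ exactly the quantity the effective overlap is designed to improve upon. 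As a formal counterexample to the implication: take two blocks with $p_1=p_2=1/2$, $c_1=1$, $c_2=1/2$, and per-block entropy pairs $(-1,1)$ and $(1,0)$; both per-block bounds are saturated, yet the combined sum is $\log_2(3/2)-\log_2(5/4)=\log_2(6/5)\approx 0.263$, strictly below the target $\log_2(4/3)\approx 0.415$. Hence the per-block entropy sums simply do not carry enough information, and no amount of care with $\DmaxOp$ or purified-distance characterisations of the per-block quantities will recover the average. The proofs in the cited references avoid this entirely: the block structure is inserted \emph{inside} a single run of the operator-inequality proof of the uncertainty relation, where the block-diagonal operator $\sum_z N_zM_xN_z\leq\bigoplus_k c_k P_k$ is eventually traced against the state --- that trace, $\trace{\rho\bigoplus_k c_k P_k}=\sum_k p_k c_k$, is where the weighted average arises --- and the smoothing is performed once, globally. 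To make your argument work you would have to redo that operator-level derivation; the per-block-then-recombine scaffolding should be abandoned.
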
 Note that (\ref{eq:ucr}) is a statement about the entropies of the post-measurement states $\rho_{XB}$ and $\rho_{ZC}$, thus it also holds for any measurements that lead to the same post-measurement states. Accordingly, one may also consider the projective purifications $\mathds{M}'$ and $\mathds{N}'$ of $\mathds{M}$ and $\mathds{N}$, applied to $\rho_A \otimes
\proj{\phi}$, where $\ket{\phi}$ is a pure state of an ancillary system. Since both measurement setups $\{\rho, \mathds{M},\mathds{N}\}$ and $\{\rho_A \otimes
\proj{\phi},\mathds{M}', \mathds{N}'\}$ give the same post-measurement states, the R.H.S of (\ref{eq:ucr}) holds for both $c^*(\rho_A,\mathds{M},\mathds{N})$ and $c^*(\rho_A \otimes \proj{\phi},\mathds{M}',\mathds{N}')$. We can thus restrict our considerations to projective measurements.

In the protocol considered, Alice performs independent binary
measurements --- $\mathds{M}_i = \{M^i_{x}\}_{x \in \{0,1\}}$ and
$\mathds{N}_i = \{N^i_{z}\}_{z \in \{0,1\}}$ --- on each subsystem
$i$. We can reduce \eqref{eq:effective} to operations on each
subsystem, if we choose $\mathds{K} = \{P_{\bar{k}}\}$ to also be in
product form, i.e., $P_{\bar{k}} = \bigotimes_i P^i_{k_i}$, where
$\bar{k}$ is a string of (not necessarily binary) letters $k_i \in
\cK$. Then plugging this, $M_{\bar{x}} = \bigotimes_i M^i_{x_i}$ and
$N_{\bar{z}} = \bigotimes_i N^i_{z_i}$ in the norm from
\eqref{eq:effective}, we get
\begin{multline}
\inftynorm{P_{\bar{k}}\sum_{\bar{z}}
  N_{\bar{z}}M_{\bar{x}}N_{\bar{z}}} \\ =  \inftynorm{\sum_{z_1,z_2,\dots} \bigotimes_i P_{k_i}
N^i_{z_i}M^i_{x_i}N^i_{z_i} }
\\  =  \prod_i \inftynorm{P_{k_i}\sum_{z_i}
N^i_{z_i}M^i_{x_i}N^i_{z_i} }.
\end{multline}

Putting this in \eqref{eq:effective} with $\rho=\bigotimes_i\rho^i$, $p^i_k:=\tr(P_k^i \rho^i)$, and dropping the subscript $i$ when possible, we obtain,
\begin{align}
& c^*(\rho_A,\mathds{M},\mathds{N}) \notag  \leq \sum_{k_1,k_2,\dots}
\prod_i p^i_{k_i} \max_{x} \inftynorm{P_{k_i}\sum_{z}
N^i_{z}M^i_{x}N^i_{z}} \notag \\
  & \qquad = \prod_i \sum_k p^i_k \max_{x} \inftynorm{P^i_k\sum_{z}
N^i_{z}M^i_{x}N^i_{z}}=: \prod_i c^{*,i}. \label{eq:effective.subsystem}
\end{align}

In the following we will refer
to \begin{equation} \label{eq:effective.decomposition} c^{i}_k
  \coloneqq \max_{x} \inftynorm{P^i_k\sum_{z}
    N^i_{z}M^i_{x}N^i_{z}} \end{equation}
as the overlap of the measurements $\{M^i_{x}\}_{x}$ and
$\{N^i_{z}\}_{z}$.

\subsection{An upper bound on the effective overlap with the CHSH value}

In this section, we first introduce the notion of CHSH operator~\cite{BelloperatorRefs} and then prove the relation between the CHSH test and the effective overlap (\ref{eq:effective.decomposition}).

In the CHSH test, two space-like separated systems share a bipartite state $\rho$ and each system has two measurements. More specifically, system A has POVMs $\{M_0^0,M_1^0\}$ and $\{M_0^1,M_1^1\}$ and system T has POVMs $\{T_0^0,T_1^0\}$ and $\{T_0^1,T_1^1\}$. Since for any POVM there is a (unitary and) projective measurement on a larger Hilbert space that has the same statistics, we can restrict our considerations to projective measurements. Then, we may write the POVMs as observables with $\pm1$ outcomes, i.e., at the site of the first system, the two observables are $O_A^0:=\sum_{s=0}^1(-1)^sM_s^0$ and $O_A^1:=\sum_{s=0}^1(-1)^sM_s^1$. Furthermore, the measurements are chosen uniformly at random. As such, the CHSH value is given by $S(\rho,\beta):=\Tr(\rho\beta)$ where the CHSH operator is defined as
\begin{align}\label{CHSH:op}
 \beta(O_A^0,O_A^1,O_T^0,O_T^1) &:=\sum_{u,v}(-1)^{u \wedge v}O_A^u \otimes O_T^v 
\end{align}
where $u,v$ and $s,t$ are the inputs and outputs, respectively. The maximization of $S(\rho,\beta)$ over the set of density operators for a fixed $\beta$ is defined by $S_{\tn{max}}(\beta)$. 
Moreover, the CHSH operator can be decomposed into a direct sum of two-qubits subspaces via Lemma \ref{lem:jordan}. Mathematically, we may write $O_A^0=\sum_kP_kO_A^0P_k$ and $O_A^1=\sum_kP_kO_A^1P_k$ where $\{P_k\}_k$ is a set of projectors such that $\tn{dim}(P_k) = 2 ~\forall~k$. Note that in Lemma \ref{lem:jordan}, one may select a partition of the Hilbert space such that each block partition has dimension two. This allows one to decompose the general CHSH operator into direct sums of qubits CHSH operators. Likewise, for the measurements of Bob, $O_B^0=\sum_rQ_rO_T^0Q_r$ and $O_B^1=\sum_rQ_rO_T^1Q_r$. For all $k$, $P_kO_A^0P_k$ and $P_kO_A^1P_k$ can be written in terms of Pauli operators,
 \begin{align}
 P_kO_A^0P_k=\vec{m}_k\cdot \Gamma_k \quad \tn{and}\quad P_kO_T^1P_k=\vec{n}_k\cdot \Gamma_k, \label{CHSH:block}
 \end{align}
where $\vec{m}_k$ and $\vec{n}_k$ are unit vectors in $\mathbb{R}_k^3$ and $\Gamma_k$ is the Pauli vector. Combining \eqref{CHSH:op} and \eqref{CHSH:block} yields
\be 
\beta=\bigoplus_{k,r}\beta_{k,r} \quad \tn{where}\quad \beta_{k,r} \in \mathds{C}^2_k\otimes\mathds{C}^2_r
\label{CHSH:twoqubits}
\ee
and it can be verified that 
\begin{align}\label{decom:CHSH}
S(\rho,\beta)&=\sum_{k,r}\lambda_{k,r}S_{k,r}
\end{align}
where 
\begin{align} \label{CHSH:lambda_kr}
&\lambda_{k,r}:=\Tr(P_k\otimes Q_r \rho) \\
\label{CHSH:w_kr}
&S_{k,r}:=\Tr(\rho_{k,r}\beta_{k,r}) 
\end{align}
Whenever the context is clear, we write $S=S(\rho,\beta)$ and $S_{\tn{max}}=S_{\tn{max}}(\beta)$. 

In the following analysis, we consider only one subsystem, the superscript $i$ is omitted, i.e., we use $c^*=\sum_kp_kc_k$ instead.
\newline
%Lemma for the relationship between CHSH value and effective overlap.
\begin{lem}\label{lem:chsh_c}
Let $\{O_A^x\}_{x \in \{0,1\}} $ and  $\{O_T^y\}_{y \in \{0,1\}} $ be observables with eigenvalues $\pm 1$ on $\hilbert_A$ and $\hilbert_T$ respectively and let $\beta=\sum_{x,y}(-1)^{x \wedge y}O_A^x \otimes O_T^y$ be the CHSH operator. Then for any $\rho\in\no{AT}$, the effective overlap $c^*$ is related to the CHSH value $S=\Tr(\rho \beta)$ by
\be
c^* \leq \frac{1}{2} + \frac{S}{8}\sqrt{8-S^2}
\ee
\end {lem}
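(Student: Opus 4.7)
The plan is to use Jordan's lemma to reduce both the effective overlap and the CHSH value to qubit blocks, evaluate each quantity in a block, and then combine them via a Jensen-type concavity argument.

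First I would invoke the Jordan decomposition (Lemma~\ref{lem:jordan}) already exhibited in equations (\ref{CHSH:block})--(\ref{CHSH:twoqubits}): projectors $\{P_k\}$ on $\hilbert_A$ and $\{Q_r\}$ on $\hilbert_T$ simultaneously block-diagonalize Alice's and Bob's observables into qubit pieces with $P_kO_A^0P_k=\vec{m}_k\cdot\Gamma_k$ and $P_kO_A^1P_k=\vec{n}_k\cdot\Gamma_k$, and analogously on the $T$-side. Since each $P_k$ commutes with both $O_A^0$ and $O_A^1$, I may take $\mathds{K}=\{P_k\}$ in the definition of the effective overlap, giving $c^*\leq\sum_k p_k c_k$ with $p_k:=\Tr(P_k\rho_A)$ and $c_k$ as in (\ref{eq:effective.decomposition}).

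Next I would compute $c_k$ and $S_{k,r}$ explicitly in the qubit setting. Using the anticommutator identity $\{\vec{m}_k\cdot\Gamma_k,\vec{n}_k\cdot\Gamma_k\}=2(\vec{m}_k\cdot\vec{n}_k)P_k$ together with $(\vec{n}_k\cdot\Gamma_k)^2=P_k$, a short algebraic manipulation gives $P_k\sum_z N_zM_xN_z = P_k/2 + ((-1)^x/2)(\vec{m}_k\cdot\vec{n}_k)\,\vec{n}_k\cdot\Gamma_k$, so that $c_k = (1+|\vec{m}_k\cdot\vec{n}_k|)/2$. For the CHSH value I would expand $\beta^2 = 4I - [O_A^0,O_A^1]\otimes[O_T^0,O_T^1]$ (a standard Bell-operator identity) and use $\|[\vec{m}_k\cdot\Gamma_k,\vec{n}_k\cdot\Gamma_k]\|_\infty = 2|\vec{m}_k\times\vec{n}_k|$ together with the qubit bound $\|[O_T^0,O_T^1]|_r\|_\infty\leq 2$ to obtain $|S_{k,r}|\leq 2\sqrt{1+|\vec{m}_k\times\vec{n}_k|}$. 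Averaging via (\ref{decom:CHSH}) then yields $S\leq 2\sum_k p_k\sqrt{1+|\vec{m}_k\times\vec{n}_k|}$.

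The last step is to combine these two inequalities. Setting $a_k:=\sqrt{1+|\vec{m}_k\times\vec{n}_k|}\in[1,\sqrt{2}]$ and noting that $|\vec{m}_k\cdot\vec{n}_k|=\sqrt{1-(a_k^2-1)^2}=a_k\sqrt{2-a_k^2}$, the bounds above become $c^*-\tfrac12\leq \tfrac12\sum_k p_k f(a_k)$ and $S/2\leq\bar a:=\sum_k p_k a_k$, with $f(a):=a\sqrt{2-a^2}$. A quick second-derivative check shows $f$ is concave on $[0,\sqrt 2]$, so Jensen gives $\sum_k p_k f(a_k)\leq f(\bar a)$; a direct calculation also identifies $f(S/2)=(S/4)\sqrt{8-S^2}$, the target right-hand side. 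The subtle point, and the main obstacle, is that $f$ is concave but not monotone on $[0,\sqrt{2}]$, attaining its maximum at $a=1$; however, in the protocol-relevant range $S\in[2,2\sqrt 2]$ one has $\bar a\geq S/2\geq 1$, and $f$ is strictly decreasing on $[1,\sqrt{2}]$, so $\bar a\geq S/2$ implies $f(\bar a)\leq f(S/2)$. Assembling everything gives the claimed bound $c^*\leq 1/2+(S/8)\sqrt{8-S^2}$.
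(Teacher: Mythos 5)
Your proposal is correct on its stated domain and follows the same skeleton as the paper's proof: Jordan's lemma (Lemma~\ref{lem:jordan}) to reduce to qubit blocks, the per-block overlap $c_k=(1+\cos\theta_k)/2$ as in \eqref{lem:c_k_theta}, a per-block CHSH bound of the form $S_{k,r}\le 2\sqrt{1+\sin\theta_k}$, and a concavity argument to recombine the blocks via \eqref{decom:CHSH}. Two of your steps differ in ways worth recording. First, where the paper imports the bound \eqref{CHSH:Seev} from Seevinck and Uffink, you rederive the (slightly weaker but sufficient) version $|S_{k,r}|\le 2\sqrt{1+\sin\theta_k}$ from the identity $\beta^2=4\mathds{1}-[O_A^0,O_A^1]\otimes[O_T^0,O_T^1]$, which makes the lemma self-contained at no cost. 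Second, and more substantively, you apply Jensen's inequality to $f(a)=a\sqrt{2-a^2}$ in the variable $a_k=\sqrt{1+\sin\theta_k}$ and only afterwards convert $\bar a\ge S/2$ into $f(\bar a)\le f(S/2)$ using that $f$ decreases on $[1,\sqrt2]$; the paper instead first asserts the per-block inequality $c_k\le \tfrac12+\tfrac{S_{k,r}}{8}\sqrt{8-S_{k,r}^2}$ and applies Jensen in the variable $S_{k,r}$. Your ordering is the more careful one: the paper's per-block inequality can fail for blocks with $S_{k,r}<2$ (there the right-hand side drops below $1$ while $c_k$ can equal $1$, e.g.\ for a block with commuting observables and vanishing CHSH value), so its ``monotonic decreasing'' step tacitly assumes every block violates CHSH, whereas your argument only ever invokes the global hypothesis $S\ge2$. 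That hypothesis is genuinely needed --- the lemma as literally stated is false for $S<2$ --- and you are right to flag the restriction to $S\in[2,2\sqrt2]$, which matches the condition $2\le\hat{S}_{\tn{tol}}\le2\sqrt2$ imposed in the Main Result.
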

\begin{proof}
Using \eqref{CHSH:block}, let the relative angle between $\vec{m}_k$ and $\vec{n}_k$ be $\theta_k \in [0,\pi/2]$ for all $k$, i.e., $\vec{m}_k \cdot \vec{n}_k=\cos(\theta_k)$. Furthermore, we can express $\vec{m}_k \cdot \Gamma_k$ and $\vec{n}_k\cdot \Gamma_k$ in terms of rank-1 projectors. Formally, we have $\vec{m}_k\cdot \Gamma_k=\ket{\vec{m}_{k}}\!\bra{\vec{m}_{k}}-\ket{-\vec{m}_{k}}\!\bra{-\vec{m}_{k}}$ and similarly for $\vec{n}_k\cdot \Gamma_k$. Plugging these into (\ref{eq:effective.decomposition}),
\be \label{lem:c_k_theta}
c_k=\underset{i,j\in\{0,1\}}{\tn{max}} |\bra{(-1)^i\vec{m}_k}(-1)^j\vec{n}_k\rangle|^2 =\frac{1+\cos{\theta_k}}{2}
\ee
Next, we want to relate $c_k$ to the CHSH value. Using the result of Seevinck and Uffink~\cite{seevinck2007}, for all $r$, (\ref{CHSH:w_kr}) satisfies
\begin{align}
S_{k,r}\leq2\sqrt{1+\sin(\theta_k)\sin(\theta_r)} \label{CHSH:Seev}
\end{align}
where $\sin(\theta_k)$ and $\sin(\theta_r)$ quantify the commutativity of Alice's $k$th and system T's $r$th measurements, respectively. From (\ref{lem:c_k_theta}) and (\ref{CHSH:Seev}) we obtain for all $r$, 
\[c_k \leq \frac{1}{2}+\frac{S_{k,r}}{8}\sqrt{8-S_{k,r}^2} ,\]
where we use the fact that the right hand side is a monotonic decreasing function. Finally,  we get
\[c^*=\sum_k p_k c_k=\sum_{k,r}\lambda_{k,r}c_k \leq \frac{1}{2}+ \frac{S}{8}\sqrt{8-S^2},\]
and the inequality is given by the Jensen's inequality and (\ref{decom:CHSH}).

\end{proof}

\subsection{Statistics and efficiency of Charlie's operation}
We recall in the protocol description, after the sifting step, Alice and Bob identify sets $\cX, \cZ$ and $\cJ$. Also, they have $\tilde{\cX}$ where $|\tilde{\cX}|$ corresponds to the total number of times Alice chooses sub-protocol $\Gamma_{\tn{QKD}}$, and both Alice and Bob choose setting $\mathsf{X}$.

 Part of the goal is to estimate the average overlap of set $\cX$ with the observed CHSH values (evaluated on sets $\cJ$) and the efficiency of Charlie's operation $\eta$. Note that $\eta=|\cX|/|\hat{\cX}|$. To do that, we need the following two lemmas: the first (Lemma \ref{lem:statistics.cstar}) gives a bound on the average effective overlap of $\cX$ in terms of the average effective overlap of $\tilde{\cX}$ and the efficiency of Charlie's operation $\eta$, and the second (Lemma \ref{lem:statistics.chsh}) gives a bound on the probability that the observed CHSH value is larger than the expected CHSH value.

\begin{lem} \label{lem:statistics.cstar}
Let $c^*_\cX$ and $c^*_{\tilde{\cX}}$ be the
average effective overlaps of $\cX$ and $\tilde{\cX}$, respectively, and let $\eta:=|\cX|/|\tilde{\cX}|$. Then 
\[
c^*_\cX \leq \frac{1}{2}+ \frac{1}{\eta}\left(c^*_{\tilde{\cX}}-\frac{1}{2}	 \right)
\]
\end{lem}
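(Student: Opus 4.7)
The plan is a short deterministic argument built on two facts. First, since $\cX \subseteq \tilde{\cX}$, splitting the sum defining $c^*_{\tilde{\cX}}$ into its $\cX$ and $\tilde{\cX}\setminus\cX$ parts gives the identity
\[
|\tilde{\cX}|\, c^*_{\tilde{\cX}} \;=\; |\cX|\, c^*_\cX \;+\; \sum_{i\in \tilde{\cX}\setminus\cX} c^{*,i},
\]
so an upper bound on $c^*_\cX$ reduces to a \emph{lower} bound on the sum of per-round overlaps on the discarded rounds. Second, combining the per-subsystem decomposition $c^{*,i} = \sum_k p^i_k\, c^i_k$ from~(\ref{eq:effective.subsystem}) with the formula $c^i_k = (1+\cos\theta^i_k)/2$ and $\theta^i_k \in [0,\pi/2]$ derived in~(\ref{lem:c_k_theta}), and using $\sum_k p^i_k = 1$, yields the uniform lower bound $c^{*,i} \geq 1/2$ for every round $i$.

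To conclude I would plug $\sum_{i\in\tilde{\cX}\setminus\cX} c^{*,i} \geq (|\tilde{\cX}|-|\cX|)/2$ into the identity, divide by $|\cX|$, and use $\eta = |\cX|/|\tilde{\cX}|$, obtaining
\[
c^*_\cX \;\leq\; \frac{1}{\eta}\, c^*_{\tilde{\cX}} - \frac{1-\eta}{2\eta} \;=\; \frac{1}{2} + \frac{1}{\eta}\Big(c^*_{\tilde{\cX}} - \tfrac{1}{2}\Big),
\]
which is exactly the stated bound. No smoothing, concentration, or entropic machinery enters; the step is purely algebraic.

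The only subtlety worth flagging is the worst-case interpretation that justifies taking this deterministic view. The set $\tilde{\cX}$ is fixed by Alice's and Bob's local choices of protocol mode and setting $\mathsf{X}$, so its composition is independent of Charlie, whereas the further restriction to $\cX$ is controlled by Charlie's pass/fail announcement. A malicious Charlie may therefore try to inflate $c^*_\cX$ by preferentially retaining the rounds with the largest $c^{*,i}$ and discarding those with the smallest; the lemma quantifies exactly how much he can gain, namely that the discarded overlaps cannot be driven below $1/2$, so the amplification factor is capped by $1/\eta$. Implicit throughout is assumption~\ref{A5}, which guarantees the product structure that lets us speak of per-round overlaps $c^{*,i}$ in the first place.
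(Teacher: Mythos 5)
Your proof is correct and follows essentially the same route as the paper: both split $|\tilde{\cX}|\,c^*_{\tilde{\cX}}$ into the contribution from $\cX$ and from the discarded rounds, lower-bound each discarded $c^{*,i}$ by $1/2$ (which indeed follows from $c^{*,i}=\sum_k p^i_k c^i_k$ and $c^i_k=(1+\cos\theta^i_k)/2\geq 1/2$), and rearrange using $\eta=|\cX|/|\tilde{\cX}|$. The only cosmetic difference is that the paper phrases this by sorting the $c^{*,i}$ in decreasing order and invoking the worst case where Charlie keeps the largest overlaps, whereas you correctly observe that the resulting inequality is a purely algebraic consequence holding for every subset $\cX\subseteq\tilde{\cX}$.
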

\begin{proof}
First, we note that $\cX\subseteq \tilde{\cX}$ with equality only if Charlie always outputs a pass (or has perfect efficiency).  Next, we consider $\{c^{*,i}\}_{i\in \tilde{\cX}}$ in decreasing order, that is, $c^{*,1}\geq c^{*,2}\geq \cdots \geq c^{*,|\tilde{\cX}|}$. Accordingly, the average overlap of $\tilde{\cX}$ can be written as
\[c_{\tilde{\cX}}^*=\frac{|\cX|}{|\tilde{\cX}|}\sum_{i=1}^{|\cX|}\frac{c^{*,i}}{|\cX|}+\sum_{j=|\cX|+1}^{|\tilde{\cX}|}\frac{c^{*,i}}{|\tilde{\cX}|} \geq \frac{|\cX|}{|\tilde{\cX}|}\left(c^*_{\cX}-\frac{1}{2}\right)+ \frac{1}{2} \]
where we consider that $\cX$ collects the large effective overlaps, and the inequality is given by $c^{*,i}\geq 1/2$. Finally, let $\eta=|\cX|/|\tilde{\cX}|$.
\end{proof}

\begin{lem} \label{lem:statistics.chsh} Let $S_{\cJ}$ be the
  average CHSH value on $m_j$ independent systems, and
  $S_{\tn{test}}$ the observed CHSH on these systems. Then \[
  \Pr\left[S_{\tn{test}}-S_{\cJ} \geq \sqrt{\frac{32}{m_j}\ln
        \frac{1}{\eps}} \right] \leq \eps.\]
\end{lem}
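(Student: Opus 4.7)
The plan is to reduce the statement to a direct application of Chernoff--Hoeffding (Lemma~\ref{lem:chernoff}). The key observation is that the CHSH estimator is just an affine function of an empirical mean of $\{0,1\}$-valued random variables. Specifically, for each $i \in \cJ$ let $F_i \coloneqq f(u_i,v_i,s_i,t_i) \in \{0,1\}$, so that $\Stes = 8 \bar F - 4$ where $\bar F \coloneqq \frac{1}{m_j}\sum_{i\in\cJ} F_i$, and correspondingly $S_{\cJ} = 8\,\E[\bar F] - 4$. Hence
\[
\Stes - S_{\cJ} = 8\bigl(\bar F - \E[\bar F]\bigr),
\]
and the event $\Stes - S_{\cJ} \geq \delta$ is identical to $\bar F - \E[\bar F] \geq \delta/8$.

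Next I would justify that the $F_i$'s satisfy the hypothesis of Lemma~\ref{lem:chernoff}, namely that they are independent random variables in $[0,1]$. Independence follows from two ingredients already laid out: the causal independence of the source and measurement devices (Assumption~\ref{A5} in Section~\ref{Ass}), which makes the device behaviour on run $i$ independent of prior runs; and the trusted local randomness (Assumption~\ref{A1}), which guarantees that each measurement choice $(u_i,v_i)$ is fresh and independent of everything else. Together, these imply that the $F_i$'s are i.i.d.\ Bernoulli random variables with common mean $\mu = \E[F_i]$.

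Finally, I would apply Chernoff--Hoeffding with $n = m_j$ and $\delta = \tfrac{1}{8}\sqrt{\tfrac{32}{m_j}\ln\tfrac{1}{\eps}}$ to get
\[
\Pr\!\left[\,\bar F - \mu \geq \delta\,\right] \leq \exp(-2\delta^2 m_j) = \exp\!\left(-\tfrac{m_j}{32}\cdot\tfrac{32}{m_j}\ln\tfrac{1}{\eps}\right) = \eps,
\]
which, after multiplying $\delta$ by $8$, yields exactly the claimed tail bound on $\Stes - S_{\cJ}$.

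There is no real obstacle here; the only subtlety worth flagging in the write-up is the independence justification, because strictly speaking the $F_i$'s depend both on the device outputs and on Alice's inputs, so one must cite both Assumption~\ref{A1} and Assumption~\ref{A5} to guarantee that Lemma~\ref{lem:chernoff} (rather than Lemma~\ref{lem:serfling}, which handles sampling without replacement) is the appropriate tool.
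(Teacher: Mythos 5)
Your proposal is correct and follows essentially the same route as the paper: define the indicator $F_i=f(u_i,v_i,s_i,t_i)$, note that $\Stes$ and $S_{\cJ}$ are the same affine function $x\mapsto 8x-4$ of the empirical mean and its expectation, and apply Lemma~\ref{lem:chernoff} with $\delta=\frac{1}{8}\sqrt{\tfrac{32}{m_j}\ln\tfrac{1}{\eps}}$. Your explicit appeal to Assumptions~\ref{A1} and~\ref{A5} for the independence hypothesis is a reasonable addition (the paper simply builds independence into the lemma's statement), and you even correct a small slip in the paper's write-up, which states $S_{\tn{test}}=Y_{\cJ}$ where it should read $S_{\tn{test}}=8Y_{\cJ}-4$.
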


\begin{proof}
We define the random variable \[Y_i
\coloneqq \begin{cases} 1 & \tn{if } s_i \oplus t_i = u_i \wedge u_i, \\
  0 & \tn{otherwise},\end{cases}\] where $u_i,v_i,s_i,t_i$ are the
inputs and outputs, respectively of the measurements on system $i$, and
$Y_\cJ \coloneqq \frac{1}{m_j} \sum_{i \in \cJ}Y_i$.  It is
easy to see that $S_i = 8\E[Y_i]-4$, $S_{\cJ} = 8\E[Y_\cJ]-4$
and $S_{\tn{test}} = Y_\cJ$. The proof is then immediate from Lemma \ref{lem:chernoff}.
\end{proof}

\subsection{Secrecy analysis}

With the relevant results in hand, we are ready to prove our main
result which roughly follows the same line of argument as
Ref.~\cite{Tomamichel2012a}. The main differences are the use of a
more general smooth entropic uncertainty relation (Lemma
\ref{thm:ucr}) to bound the error on the secrecy, and the CHSH test to
bound the effective overlap of the measurement operators and states
used by the uncertainty relation (Lemma \ref{lem:chsh_c}). Since the
players can only sample the CHSH violation, we use Lemma
\ref{lem:statistics.cstar} to bound the distance between this estimate
and the expected effective overlap of the key set. The correctness of
the protocol are evaluated in exactly the same way as in
Ref~\cite{Tomamichel2012a}, so we refer to that work for the
corresponding bounds and theorems. We only prove the secrecy of the
protocol here.

Contrary to most QKD protocols, the protocol adopts a tripartite model where Charlie is supposed to establish entanglement between Alice and Bob. Thus in our picture, we can view Charlie as an accomplice of the adversary and evaluate the secrecy on the overall state conditioned on the events where Charlie outputs a pass.

We briefly recall the main parameters of the protocol, which are
detailed in the protocol definition given in the paper. Conditioned on
the successful operation of Charlie (the events whereby Charlie
outputs a pass), Alice and Bob generate systems until at least $m_x$
of them have been measured by both of them in the basis $\mathsf{X}$,
$m_z$ have been measured in the basis $\mathsf{Z}$, and $j$ have been
chosen for both CHSH tests. The tolerated error rate and the CHSH
value are $Q_{\tn{tol}}$ and $S_{\tn{tol}}$, respectively.

Furthermore, we take that our information reconciliation scheme leaks
at most $\leak+\lceil\log(1/\epscorr)\rceil$-bits of information,
where an error correction scheme which leaks at most $\leak$-bits of
information is applied~\cite{RennerThesis}, then an error verification scheme using two-universal hashing which leaks
$\lceil\log(1/\epscorr)\rceil$-bits of information is
applied. If the error verification fails, they
abort the protocol.

\begin{thm}\label{securitythm}
  The protocol is $\eps_{\tn{sec}}$-secret if for some
  $\eps_Q,\eps_{\tn{UCR}},\eps_{\tn{PA}},\eps_{c^*},\eps_{\tn{CHSH}}
  > 0$ such that $4 \eps_Q + 2\eps_{\tn{UCR}} + \eps_{\tn{PA}} +
  \eps_{c^*} + \eps_{\tn{CHSH}} \leq \eps_{\tn{sec}}$, the final
  secret key length $\ell$ satisfies
  \begin{multline}\label{eq:securitythm}
  \! \!\! \ell \leq m_x \bigg( 1 - \log_2
    \bigg(1+\frac{\hat{S}_{\tn{tol}}}{4\eta_{\tn{tol}}} \sqrt{8 -
      \hat{S}_{\tn{tol}}^{2}} + \zeta(\eps_{c^*})\bigg)  - \tn{h}(\hat{Q}_{\tn{tol}})
    \bigg) \\ - \tn{leak}_{\tn{EC}} - \log_2
    \frac{1}{\eps_{\tn{UCR}}^2\eps_{\tn{PA}}^2\epscorr} ,
    \end{multline}
    %%%%%%%%%%%%%%%%%
  where $\hat{S}_{\tn{tol}} :=S_{\tn{tol}} - \xi(\eps_{\tn{CHSH}})$ and $ \hat{Q}_{\tn{tol}} :=
   Q_{\tn{tol}}  + \mu(\eps_Q)$ with the statistical deviations given by 
\[ 
   %%%%%%%%%%%%%%%%%%
    \xi(\eps_{\tn{CHSH}})  \coloneqq \sqrt{\frac{32}{m_j}\ln \frac{1}{\eps_{\tn{CHSH}}}},\]
      %%%%%%%%%%%%%%%%%%
  \[  \zeta(\eps_{c^*}) \coloneqq \sqrt{\frac{2(m_x+m_j\eta)(m_j+1)}{m_xm_j^2}\ln \frac{1}{\eps_{c^*}}},\\\] and \[ 
      %%%%%%%%%%%%%%%%%%
    \mu(\eps_Q)  \coloneqq \sqrt{\frac{(m_x+m_z)(m_z+1)}{m_x m_z^2} \ln
      \frac{1}{\eps_Q}}.
\]
\end{thm}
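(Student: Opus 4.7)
The plan is to follow the outline given in the main text, chaining together the privacy amplification bound, the generalized entropic uncertainty relation (Lemma~\ref{thm:ucr}), and the CHSH--to--overlap relation (Lemma~\ref{lem:chsh_c}), while treating Charlie as part of the adversary and absorbing all statistical deviations via the concentration inequalities of Section~III.B of the appendix.

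First, I would condition on the event that all parameter-estimation checks pass and that error verification succeeds, and denote by $X_A$ the raw key of length $m_x$ extracted from $\cX$ and by $E'$ Eve's total side information (including $E$, all classical transcript, error-correction, and hash values). The Leftover Hash Lemma with two-universal hashing from~\cite{RennerThesis} reduces $\esec$-secrecy to a lower bound on $\HminSmooth{2\eps_{\tn{UCR}}+\bar\eps}{X_A|E'}$, with smoothing parameters chosen so that the various $\eps$'s accumulate to at most $\esec$. A standard smooth-min-entropy chain rule then strips off the classical leakage, giving $\HminSmooth{2\eps_{\tn{UCR}}+\bar\eps}{X_A|E'} \geq \HminSmooth{2\eps_{\tn{UCR}}+\bar\eps}{X_A|E} - \leak - \lceil\log_2(1/\epscorr)\rceil$.

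Second, I would apply Lemma~\ref{thm:ucr} to the key-generation subsystems, with $\mathds{M}$ corresponding to Alice's $\mathsf{X}$ setting and $\mathds{N}$ to her hypothetical $\mathsf{Z}$ setting, to obtain
\[ \HminSmooth{2\eps_{\tn{UCR}}+\bar\eps}{X_A|E} \geq \log_2\tfrac{1}{c^*_\cX} - \HmaxSmooth{\eps_{\tn{UCR}}}{Z_A|Z_B} - \log_2\tfrac{2}{\bar\eps^2}, \]
where $c^*_\cX$ is the product effective overlap on the key set factorized as in \eqref{eq:effective.subsystem}. The max-entropy term is handled exactly as in~\cite{Tomamichel2012a}: Serfling sampling (Corollary~\ref{Cor:serfling}) on the $\mathsf{Z}$-test set $\cZ$ gives, with failure probability at most $\eps_Q$, the bound $\HmaxSmooth{\eps_{\tn{UCR}}}{Z_A|Z_B}\leq m_x\,\tn{h}(\hat Q_{\tn{tol}})$ with $\hat Q_{\tn{tol}}=\Qtol+\mu(\eps_Q)$.

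The principal obstacle, and the novel part of the argument, is the bound on $c^*_\cX$. Charlie is untrusted and may bias the selection of $\cX$ inside $\tilde\cX$ towards rounds where the devices happen to have poor overlap; moreover $\cJ$ is disjoint from $\cX$, so overlaps on $\cJ$ are not directly overlaps on $\cX$. I would overcome this in three steps. Lemma~\ref{lem:statistics.cstar} gives the deterministic worst-case inequality $c^*_\cX \leq \tfrac12 + \tfrac{1}{\eta}(c^*_{\tilde\cX}-\tfrac12)$, valid whenever $\eta\geq\eta_{\tn{tol}}$ as enforced by the efficiency test. Next, because Alice chooses between the $\Gamma_{\tn{CHSH}}$ and $\Gamma_{\tn{QKD}}$ modes with her own trusted randomness and the devices are causally independent by assumption~\ref{A5}, the rounds in $\tilde\cX$ and in $\cJ$ are i.i.d.\ samples (without replacement) from the same list of per-round effective overlaps; Corollary~\ref{Cor:serfling} then bounds $c^*_{\tilde\cX}$ by $c^*_\cJ$ up to the deviation $\zeta(\eps_{c^*})$.

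Finally, Lemma~\ref{lem:chsh_c} applied block-wise through the Jordan decomposition of the per-round CHSH operator, together with Jensen's inequality as in \eqref{decom:CHSH}, converts the overlap estimate on $\cJ$ into a CHSH estimate: $c^*_\cJ \leq \tfrac12 + \tfrac{S_\cJ}{8}\sqrt{8-S_\cJ^2}$. The Chernoff--Hoeffding bound of Lemma~\ref{lem:statistics.chsh} then replaces the (inaccessible) expected value $S_\cJ$ by the observed $\Stes\geq\Stol$ at the cost of $\xi(\eps_{\tn{CHSH}})$, using monotonicity of $s\mapsto s\sqrt{8-s^2}$ on $[2,2\sqrt 2]$. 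Substituting everything back into the uncertainty relation, and collecting the smoothing and statistical failure probabilities via the union bound $4\eps_Q+2\eps_{\tn{UCR}}+\eps_{\tn{PA}}+\eps_{c^*}+\eps_{\tn{CHSH}}\leq\esec$, yields \eqref{eq:securitythm}. Choosing all $\eps$'s equal to $\esec/9$ recovers the simpler form of the main result in the body.
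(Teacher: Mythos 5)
Your proposal is correct and follows essentially the same route as the paper's proof: leftover hashing plus chain rules, the uncertainty relation of Lemma~\ref{thm:ucr}, the worst-case efficiency bound of Lemma~\ref{lem:statistics.cstar}, Serfling sampling between $\tilde\cX$ and $\cJ$, and Lemma~\ref{lem:chsh_c} with Chernoff--Hoeffding to pass from the expected to the observed CHSH value. The only points treated more carefully in the paper are the explicit arithmetic--geometric mean step converting the product $\prod_{i\in\cX}c^{*,i}$ into $(c^*_\cX)^{m_x}$ before invoking the average-overlap lemmas, and the bookkeeping of the conditioning on the passing event $\Omega$ (rescaling the failure probabilities by $p_\Omega$ and removing the conditioning at the end), both of which your union-bound summary glosses over but neither of which changes the argument.
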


\begin{proof}
Let $\Omega$ be the event that $Q_{\tn{test}} \leq Q_{\tn{tol}}$ and $S_{\tn{test}} \geq S_{\tn{tol}}$ and $\eta \geq \eta_{\tn{tol}}$.  If $\Omega$ fails to occur, then the protocol aborts, and the secrecy error is trivially zero. Conditioned on passing these tests, let $X$ be the random variable on strings of length $m_x$ that Alice gets from the set $\cX$, and let $E$ denote the adversary's information obtained by eavesdropping on the quantum channel. After listening to
  the error correction and hash value, Eve has a new system
  $E'$. Using $\lceil\log(1/\epscorr)\rceil \leq \log_2(2/\epscorr)$ (the number bits used for error correction and error verification) and using chain rules for smooth entropies~\cite{RennerThesis} we bound the min-entropy of the $X$ given $E'$
\[\HminSmooth{2\eps+\eps_{\tn{UCR}}}{X|E'} \geq
\HminSmooth{2\eps+\eps_{\tn{UCR}}}{X|E} - \tn{leak}_{\tn{EC}}-\log_2\frac{2}{\epscorr}.\]
From the entropic uncertainty relation (Lemma \ref{thm:ucr}), we further get
\[\HminSmooth{2\eps+\eps_{\tn{UCR}}}{X|E} \geq \log_2 \frac{1}{c^*} -
\HmaxSmooth{\eps}{Z|B} - \log_2 \frac{2}{\eps_{\tn{UCR}}^2},\] where $Z$ can
be seen as the outcome Alice would have gotten if she had measured the
same systems in the corresponding basis $\mathsf{Z}$, and $B$ is Bob's
system in this case (before measurement). 

The max-entropy of the alternative measurement is then
bounded by the error rate sampled on the $m_z$
systems $\cZ$~\cite{Tomamichel2012a}:
\[\HmaxSmooth{\eps}{Z|B} \leq m_x \tn{h}(Q_{\tn{tol}} +
\mu(\eps_Q)),\] 
where $\eps = \eps_Q/\sqrt{p_{\Omega}}$
and $p_{\Omega}:=\Pr[\mathsf{\Omega}]$.

Next, we bound $c^*$ (evaluated on Alice's devices from $\cX$) in terms of the observed CHSH value
$S_{\tn{test}}$. We first use the arithmetic-geometric mean's inequality, from which we get
\[ c^*\leq\prod_{i \in \cX} c^{*,i} \leq \left(\sum_{i \in \cX}
  \frac{c^{*,i}}{m_x}\right)^{m_x} = \left(c^*_\cX\right)^{m_x},\] where $c^*_{\cX}$ is the average effective overlap on $\cX$. Using Lemma. \ref{lem:statistics.cstar}, we get
\[
c^*_{\cX} \leq \frac{1}{2}+\frac{1}{\eta}\left(c_{\tilde{\cX}}^*-\frac{1}{2} 	\right).\] Since $\tilde{\cX}$ is randomly chosen by Alice and is independent of Charlie, $c^*_{\tilde{\cX}}$ can be estimated from $c^*_{\cJ}$, i.e., we apply Corollary \ref{Cor:serfling} to further obtain
$\Pr [ c^*_{\tilde{\cX}} - c^*_{\cJ} \geq \zeta (\eps_{c^*})/2 ]
\leq \eps_{c^*}$, hence \[ \eps' \coloneqq \Pr \left[ c^*_{\tilde{\cX}} -
  c^*_{\cJ} \geq \frac{\zeta (\eps_{c^*})}{2} \middle| \Omega
\right] \leq \frac{\eps_{c^*}}{p_{\Omega}}.\] 
Lemma \ref{lem:chsh_c} can now be used together with
Jensen's inequality, so with probability at least $1-\eps'$, \[c^*_{\tilde{\cX}}
\leq \frac{1}{2}\left( 1+\frac{S_{\cJ}}{4} \sqrt{8 -
    S_{\cJ}^2} + \zeta(\eps_{c^*})\right).\]  We still need to
take into account that we only have an approximation for the CHSH
value of the systems in $\cJ$. From Lemma \ref{lem:statistics.chsh} we
get that \[\eps'' \coloneqq \Pr \left[
  S_{\cJ}\leq\hat{S}_{\tn{test}}
  \middle|\Omega\right] \leq
\frac{\eps_{\tn{CHSH}}}{p_{\Omega}}.\]

Finally, the bound on the error of privacy amplification by
universal hashing~\cite{RennerThesis} says that the error is less than
$4\eps + 2\eps_{\tn{UCR}} + \eps_{\tn{PA}}$ as long as \[ \ell \leq
\HminSmooth{2\eps+\eps_{\tn{UCR}}}{X|E'} - 2 \log_2 \frac{1}{2\eps_{\tn{PA}}}.\]

Putting all the above equations together we get
(\ref{eq:securitythm}), with a total error conditioned on the event $\Omega$ of at most $4\eps + 2\eps_{\tn{UCR}} + \eps_{\tn{PA}}
+ \eps'+\eps''$. If we
remove this conditioning, the error is then
\begin{multline}\notag p_{\Omega} (4\eps + 2\eps_{\tn{UCR}} + \eps_{\tn{PA}}
+\eps'+\eps'') \\ \leq 4\eps_Q + 2\eps_{\tn{UCR}} + \eps_{\tn{PA}} +
\eps_{c*}+\eps_{\tn{CHSH}}\leq \esec. \qedhere \end{multline}
\end{proof}

\bibliographystyle{unsrt}

\end{document}